\documentclass[conference]{IEEEtran}
\IEEEoverridecommandlockouts

\usepackage{cite}
\usepackage{amsmath,amssymb,amsfonts}
\usepackage{graphicx}
\usepackage{textcomp}
\usepackage{xcolor}
\def\BibTeX{{\rm B\kern-.05em{\sc i\kern-.025em b}\kern-.08em
    T\kern-.1667em\lower.7ex\hbox{E}\kern-.125emX}}

\usepackage{cite}
\usepackage{amsmath,amssymb,amsfonts,mathtools}
\usepackage{amsthm}
\usepackage{algorithm}
\usepackage{algpseudocode}
\usepackage{graphicx}
\usepackage{booktabs}
\usepackage{multirow}
\usepackage{array}
\usepackage{url}
\usepackage{xcolor}
\usepackage{tikz}
\usetikzlibrary{arrows.meta,positioning,fit,calc}

\newtheorem{definition}{Definition}
\newtheorem{assumption}{Assumption}
\newtheorem{proposition}{Proposition}
\newtheorem{lemma}{Lemma}

\DeclareMathOperator{\Cov}{Cov}
\DeclareMathOperator{\CVaR}{CVaR}

\DeclareMathOperator{\dist}{dist}

\newcommand{\R}{\mathbb{R}}
\newcommand{\E}{\mathbb{E}}
\newcommand{\cN}{\mathcal{N}}
\newcommand{\cX}{\mathcal{X}}
\newcommand{\cA}{\mathcal{A}}
\newcommand{\cK}{\mathcal{K}}
\newcommand{\cF}{\mathcal{F}}
\newcommand{\cT}{\mathcal{T}}

\newcommand{\bphi}{\boldsymbol{\Phi}}

\newcommand{\one}{\mathbf{1}}

\begin{document}

\title{ 
Differentiable Policy Transport over Multi-Layer Network Feasibility Geometry
}

\author{
\begin{tabular}{ccc}

\begin{tabular}[t]{c}
Zuyuan Zhang\\[-0.15em]
{\small\textit{The George Washington University, USA}}\\[-0.15em]
{\small zuyuan.zhang@gwu.edu}
\end{tabular}

&

\begin{tabular}[t]{c}
Zeyu Fang\\[-0.15em]
{\small\textit{The George Washington University, USA}}\\[-0.15em]
{\small joey.fang@gwu.edu}
\end{tabular}

&

\begin{tabular}[t]{c}
Mahdi Imani\\[-0.15em]
{\small\textit{Northeastern University, USA}}\\[-0.15em]
{\small m.imani@northeastern.edu}
\end{tabular}

\\[3.0em]

\multicolumn{3}{c}{
\begin{tabular}{cc}

\begin{tabular}[t]{c}
Nathaniel D. Bastian\\[-0.15em]
{\small\textit{Johns Hopkins University, USA}}\\[-0.15em]
{\small\textit{Syracuse University, USA}}\\[-0.15em]
{\small ndbastian@jhu.edu}
\end{tabular}

& \hspace{3.5em}

\begin{tabular}[t]{c}
Tian Lan\\[-0.15em]
{\small\textit{The George Washington University, USA}}\\[-0.15em]
{\small tlan@gwu.edu}
\end{tabular}

\end{tabular}
}

\end{tabular}
}

\maketitle

\begin{abstract}
Learning-based control is increasingly central to automating network operations. 
A learned policy, however, must satisfy cross-layer constraints on interference, power-rate coupling, flow conservation, service chains, capacity, latency, and reliability.
Existing methods typically account for only a subset of this geometry and only indirectly, e.g., through reward penalties, Lagrange multipliers, or post-hoc repairs. This paper proposes \emph{Network Feasibility Geometry Reinforcement Learning} (NFG-RL), which models coupled constraints via transport theory and the residual inclusion $\bphi_{\mathfrak{N}}(x,a)\in\cK_{\mathfrak{N}}$, defining the executed policy as the pushforward of a proto-policy through a feasibility-transport map.
NFG-RL compiles heterogeneous constraints into typed residual blocks and transports proto-actions through a differentiable variational operator, letting active constraints shape execution, exploration, and actor gradients.
Our analysis shows that exact transport yields almost-sure feasible execution, while active constraints contract exploration onto the feasible tangent space. It further establishes a nonnegative first-order gain from critic-tilted transport over plain projection and recovers backpressure scheduling as the gradient of a lifted drift residual. In two public-trace-conditioned wireless-edge surrogate environments,
NFG-RL improves feasible utility by
\textbf{37.5--41.5\%}
over the strongest non-NFG method in each environment,
reduces raw-action violation by
\textbf{48.5--60.8\%},
and lowers P99 delay by
\textbf{57.0--75.5\%}, outperforming a range of optimization and learning baselines. 
\end{abstract}

\begin{IEEEkeywords}
Reinforcement learning, network optimization, constrained control, differentiable optimization, wireless edge computing, network slicing.
\end{IEEEkeywords}

\section{Introduction}

Modern communication networks increasingly rely on learning-based control across multiple network layers, encompassing wireless resource allocation, congestion control, management, routing, scheduling, admission, and service placement ~\cite{mao2017survey,mach2017mobile,xu2018experience,liu2021drl,zhang2024distributed,zhang2025network,zhang2026counterfactual,zhang2026lisfc}.
To this end, Markov Decision Process (MDP) models and Reinforcement Learning are particularly attractive, because neural policies can adapt to complex network dynamics, such as stochastic traffic, time-varying channels, failures, and elastic service objectives, that are difficult to model accurately or completely~\cite{luong2019applications,xu2018experience,zheng2021leveraging,zou2024distributed,qiao2024br}.

A learned policy must satisfy the physical laws, structural dependencies, and operational limits imposed by different network layers, which together determine whether a control action is executable. We refer to this coupled structure as the network feasibility geometry. 
Examples include flow conservation and service-chain continuity that require topological and logical consistency; link, spectrum, CPU, GPU, memory, and buffer limits that impose resource feasibility; wireless interference and power-rate coupling that mandate physical compatibility; queue stability and virtual budgets that cause temporal dependencies; and slicing, reliability, and tail-latency requirements that impose service-level restrictions~\cite{tassiulas1990stability,neely2010stochastic,chen2021bringing,rockafellar2000optimization}. 
Network feasibility geometry unifies the representation of these physical laws, dependencies, and limits to define the set of actions that the network can execute~\cite{tassiulas1990stability,neely2010stochastic,mach2017mobile,zhang2026geometry,zhang2026metric,zhang2026geometry}. 
Strict admissibility based on this network geometry must be settled before an action reaches the network.

\begin{table*}[t]
\centering
\caption{ Network law primitives represented by typed residual inclusions
$\bphi^{\ell}(x,a)\in\cK^{\ell}$; residual dimensions may differ across blocks.}
\label{tab:primitives}
\footnotesize
\begin{tabular}{p{0.15\textwidth}p{0.27\textwidth}p{0.22\textwidth}p{0.24\textwidth}}
\toprule
Primitive & Residual/set & Action derivative & Examples \\
\midrule
Balance 
& $\bphi^{\mathrm{bal}}(x,a)=B_{\mathfrak{N}}a-b(x)\in\{0\}$ 
& $D_a\bphi^{\mathrm{bal}}=B_{\mathfrak{N}}$ 
& Flow conservation, traffic conservation, service-stage continuity \\

Capacity 
& $\bphi^{\mathrm{cap}}(x,a)=R_{\mathfrak{N}}a-C_{\mathfrak{N}}(x,a)\in\R_-^m$ 
& $R_{\mathfrak{N}}-D_aC_{\mathfrak{N}}(x,a)$ 
& Link capacity, spectrum, CPU, GPU, memory, buffer limits \\

Physical coupling 
& $\bphi^{\mathrm{phy}}(x,a)=H_{\mathfrak{N}}(x,a)\in\R_-^m$ 
& $D_aH_{\mathfrak{N}}(x,a)$ 
& Wireless interference, SINR, power-rate coupling, half-duplex constraints \\

Dynamic stability 
& $\bphi^{\mathrm{dyn}}(x,a)=\E[V(y^+)-V(y)\mid x,a]-\delta_{\mathrm{dyn}}(x)\in\R_-$ 
& Expected drift gradient or sampled drift gradient 
& Queue stability, congestion stability, virtual queues, long-term budgets \\

Service logic
& $\bphi^{\mathrm{svc}}(x,a)=H_{\mathrm{svc}}(x,a)\in\{0\}$
& $D_aH_{\mathrm{svc}}(x,a)$ after relaxation
& Service chaining, placement dependency, affinity constraints \\

Risk and SLA
& $\bphi^{\mathrm{risk}}(x,a)=H_{\mathrm{risk}}(x,a)\in\R_-^m$
& $D_aH_{\mathrm{risk}}(x,a)$ or a subgradient
& Tail latency, reliability, loss, deadline violation \\

Discrete feasibility 
& $\bphi^{\mathrm{disc}}(x,a) =G_{\mathrm{disc}}(x)a_{\mathrm{disc}}-h_{\mathrm{disc}}(x)\in\R_-^r$ 
& $G_{\mathrm{disc}}(x)$ or differentiable relaxation 
& Path selection, channel assignment, matching, placement, admission \\
\bottomrule
\end{tabular}
\end{table*}

Learning-based approaches often account for \emph{only a subset} of this geometry and \emph{only indirectly}. Penalty-based RL models violation as a penalty on attainable reward, primal--dual and constrained RL recast constraints into multipliers, and projection or repair methods correct an infeasible action after it has already been sampled~\cite{altman2021constrained,achiam2017constrained,chow2018lyapunov,tessler2018reward,amos2017optnet,agrawal2019differentiable}. Another line of work encodes structure into the policy input or architecture, using graph neural networks for topology awareness and attention for logical dependencies~\cite{geyer2019deeptma,rusek2020routenet,almasan2022deep}. They often provide soft enforcement through loss minimization. Closest to our setting, action-constrained and safety-layer methods wrap the actor in a projection, generative, or reduced-gradient feasibility layer~\cite{dalal2018safe,pham2018optlayer,donti2021dc3,lin2021escaping,brahmanage2023flowpg,liang2023low,tabas2022computationally,ding2023reduced}, with safety-aware training~\cite{xiao2023safe} and applications to network slicing~\cite{liu2021constraintaware,liu2021onslicing,nagib2025safeslice}. They typically focus on individual layers or dimensions~\cite{lin2021escaping,kasaura2023benchmarking} of the network geometry and fail to enable a holistic solutions.

This paper proposes \emph{Network Feasibility Geometry Reinforcement Learning} (NFG-RL). 
We model a networked decision system by a residual inclusion $\bphi_{\mathfrak{N}}(x,a)\in\cK_{\mathfrak{N}}$, which induces the feasible set $\cF_{\mathfrak{N}}(x)=\{a\in\cA:\bphi_{\mathfrak{N}}(x,a)\in\cK_{\mathfrak{N}}\}$. A single differentiable transport then enforces every layer of this feasible set at once. 
In particular, NFG-RL samples a proto-action from an unconstrained proto-policy and transports it through a state-dependent feasibility map. 
The executed policy is then the pushforward distribution induced by this transport map, so feasibility shapes the action distribution before execution. The transport differs from generic projection layers in two ways: it is anisotropic in a metric built from the residual Jacobians, and it is tilted by the critic gradient, so the executed action is feasible and, to first order, improves the estimated value over plain projection (Proposition~\ref{prop:improvement}).
A local variational transport operator over the residual Jacobians exposes the tangent and normal cones of the active constraints, and its KKT system pulls residual-space normal forces back into action coordinates~\cite{boyd2004convex,rockafellar1998variational,amos2017optnet,agrawal2019differentiable,blondel2022efficient,donti2017task}, so the same constraint information shapes both the executed action and the actor gradient.

NFG-RL inserts this feasibility transport between a proto-actor and the environment; Table~\ref{tab:primitives} lists the constraint families that compile into the interface it enforces. One compiled triple $(\bphi_{\mathfrak{N}},\cK_{\mathfrak{N}},J_{\Phi})$ thereby acts on the policy in three places. It fixes where the executed policy may place probability mass, since exact transport yields almost-sure feasible execution along entire trajectories (Lemmas~\ref{lem:as_execution_feasibility} and~\ref{lem:transport_compatibility}). It contracts exploration noise onto the feasible tangent space at active constraints (Proposition~\ref{prop:covariance}). It filters the critic gradient through the transport Jacobian, with a provable first-order value gain over plain projection (Proposition~\ref{prop:improvement}), and a lifted drift residual recovers backpressure scheduling as a special case (Proposition~\ref{prop:backpressure}).

We implement NFG-RL in a multi-tenant wireless edge computing system and evaluate it in public-trace-conditioned wireless-edge environments, SMEC-5G and 5G-C3~\cite{zhang2026enabling,raca2020beyond,tocze2022edge,mehran2022matching}. The geometry we consider spans routing, wireless scheduling, power control, compute allocation, service placement, admission, queue stability, slicing, reliability, tail latency, and discrete execution decisions~\cite{mao2017survey,mach2017mobile,chen2021bringing}.
NFG-RL outperforms a range of baselines including classical heuristics, scalar penalty and
primal--dual baselines, a tenant-interaction graph policy, a generic
projection baseline, and mechanism and residual-family ablations
~\cite{schulman2017proximal,haarnoja2018soft,achiam2017constrained,
tassiulas1990stability,geyer2019deeptma}.

The paper makes the following contributions:
\begin{itemize}
    \item We formulate learning-based network control through one typed residual inclusion $\bphi_{\mathfrak{N}}(x,a)\in\cK_{\mathfrak{N}}$ and optimize over policies supported on its induced feasible sets.
    
    \item We introduce anisotropic, critic-tilted feasibility transport and its pushforward policy. Zero tilt fixes feasible actions, active constraints contract exploration onto the tangent space, the tilt gives a nonnegative first-order gain over plain projection, and a lifted drift residual recovers backpressure.
    
    \item We compile fourteen constraint families of a multi-tenant wireless edge system into this interface and evaluate NFG-RL against classical, penalty, primal--dual, graph, and projection baselines in public-trace-conditioned environments.
\end{itemize}

\section{Networked Decision Systems}
\label{sec:nds}

We first fix the decision object and notation for the geometric construction in Section~\ref{sec:nfg}. A network controller is modeled as a Markov decision system whose constraint map compiles heterogeneous constraints into the residual inclusion $\bphi_{\mathfrak{N}}(x,a)\in\cK_{\mathfrak{N}}$; admissibility is then required of the support of the executed policy.  Concrete queues, links, channels, routes, and servers enter only through the coordinates of $(x,a)$ and the residual blocks of this inclusion.

\begin{definition}[Networked decision system]
\label{def:nds}
A networked decision system is a tuple $\mathfrak{N}=(\cX,\cA,P,U,\bphi,\cK)$, where $\cX$ is a standard Borel state space, $\cA$ is a nonempty Borel subset of a finite-dimensional Euclidean space $\mathbb{A}$, $P(dx'\mid x,a)$ is a Markov kernel on $\cX$, and $U:\cX\times\cA\to\R$ is measurable.  The measurable law map $\bphi:\cX\times\cA\to\mathcal{Z}_{\mathfrak{N}}$ takes values in a finite-dimensional normed residual space, and $\cK\subseteq\mathcal{Z}_{\mathfrak{N}}$ is nonempty and closed. 
We assume that $U$ is bounded or that the discounted return is integrable under the policies considered; when emphasizing the system, we write $\bphi_{\mathfrak{N}}$ and $\cK_{\mathfrak{N}}$, and write $\mathcal{Z}$ for $\mathcal{Z}_{\mathfrak{N}}$ when no ambiguity arises.
\end{definition}

Here $x$ is the Markov information available to the controller, such as a physical state, augmented history, belief, or sufficient statistic, and $a$ is the executed routing, scheduling, power, compute, admission, placement, or path decision.  The inclusion $\bphi_{\mathfrak{N}}(x,a)\in\cK_{\mathfrak{N}}$ therefore decides admissibility before execution.

\begin{assumption}[Basic regularity]
\label{ass:basic_regularity}
For the geometric arguments below, $\cA$ is closed in $\mathbb{A}$, possibly after replacing a discrete action set by a closed continuous relaxation; $\cK$ is closed; and $\bphi$ is jointly measurable and continuous in $a$ for each fixed $x$.  Where first-order information is used, $\bphi(x,\cdot)$ is differentiable; for locally Lipschitz nonsmooth primitives, $D_a\bphi(x,a)$ denotes a fixed generalized-Jacobian element selected by the transport solver.  The feasible set is nonempty at every modeled state, possibly because $\cA$ contains a drop, no-op, safety, or repair action.
\end{assumption}

No convexity or global smoothness is assumed: continuity is used only for closedness, and first-order regularity only where the transport operator evaluates $D_a\bphi(x,a)$.

\begin{definition}[Network admissibility]
For $x\in\cX$, an action $a\in\cA$ is network-admissible if $\bphi_{\mathfrak{N}}(x,a)\in\cK_{\mathfrak{N}}$, and the feasible action set is
\begin{equation}
\label{eq:feasible_set}
    \cF_{\mathfrak{N}}(x)
    =
    \{a\in\cA:\bphi_{\mathfrak{N}}(x,a)\in\cK_{\mathfrak{N}}\}.
\end{equation}
For $\varepsilon\ge0$, define the relaxed set $\cF_{\mathfrak{N}}^{\varepsilon}(x)=\{a\in\cA:\dist_{\mathcal{Z}_{\mathfrak{N}}}(\bphi_{\mathfrak{N}}(x,a),\cK_{\mathfrak{N}})\le\varepsilon\}$; closedness of $\cK_{\mathfrak{N}}$ gives $\cF_{\mathfrak{N}}^{0}(x)=\cF_{\mathfrak{N}}(x)$.
\end{definition}

The next lemma supplies the closedness and measurable-graph properties required for state-dependent action constraints.

\begin{lemma}[Well-posed feasible sets]
\label{lem:well_posed_feasible_sets}
Under Assumption~\ref{ass:basic_regularity}, $\cF_{\mathfrak{N}}(x)$ and $\cF_{\mathfrak{N}}^{\varepsilon}(x)$ are closed for every $x$ and $\varepsilon\ge0$, and their graphs are measurable subsets of $\cX\times\cA$.
\end{lemma}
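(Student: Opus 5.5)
The plan is to write both sets as preimages of closed sets under a single scalar function and then read off closedness and measurability from the regularity in Assumption~\ref{ass:basic_regularity}. Define $d(x,a)=\dist_{\mathcal{Z}}(\bphi_{\mathfrak{N}}(x,a),\cK_{\mathfrak{N}})$. Since $z\mapsto\dist_{\mathcal{Z}}(z,\cK_{\mathfrak{N}})$ is $1$-Lipschitz on the normed space $\mathcal{Z}$, $d$ is a composition of a Lipschitz map with $\bphi$. Hence $d$ inherits continuity in $a$ for fixed $x$ and joint measurability on $\cX\times\cA$.

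Then $\cF^{\varepsilon}_{\mathfrak{N}}(x)=\{a\in\cA: d(x,a)\le\varepsilon\}$. Because $\cK_{\mathfrak{N}}$ is closed, $\dist(z,\cK_{\mathfrak{N}})=0$ if and only if $z\in\cK_{\mathfrak{N}}$, so $\cF^{0}_{\mathfrak{N}}(x)=\cF_{\mathfrak{N}}(x)$. It therefore suffices to treat $\cF^{\varepsilon}_{\mathfrak{N}}$ for all $\varepsilon\ge0$.

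\emph{Closedness.} For fixed $x$, $\cF^{\varepsilon}_{\mathfrak{N}}(x)$ is the preimage of $(-\infty,\varepsilon]$ under the continuous map $d(x,\cdot):\cA\to\R$, so it is closed relative to $\cA$. Since $\cA$ is closed in $\mathbb{A}$ by Assumption~\ref{ass:basic_regularity}, a relatively closed subset of $\cA$ is closed in $\mathbb{A}$. Concretely, if $a_k\in\cF^{\varepsilon}_{\mathfrak{N}}(x)$ and $a_k\to a$, then $a\in\cA$ by closedness of $\cA$, and $d(x,a)=\lim_k d(x,a_k)\le\varepsilon$.

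\emph{Measurability of the graph.} The graph is $\{(x,a)\in\cX\times\cA: d(x,a)\le\varepsilon\}=d^{-1}((-\infty,\varepsilon])$. This is a Borel-measurable subset of $\cX\times\cA$ because $d$ is jointly measurable. Here $\cA$ carries its Borel $\sigma$-algebra as a Borel subset of $\mathbb{A}$, and the product $\sigma$-algebra on $\cX\times\cA$ is used.

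The only point needing care is the measurability of $d$ itself. Joint measurability of $\bphi$ is assumed, and $\dist(\cdot,\cK_{\mathfrak{N}})$ is continuous, hence Borel on $\mathcal{Z}$; the composition of a measurable map with a Borel function is measurable, so no Carathéodory argument is needed. It is worth noting, though, that joint measurability plus continuity in $a$ is exactly the Carathéodory setting. That setting would additionally give closed-valued measurable multifunctions in the Effros sense via Kuratowski--Ryll-Nardzewski, if measurable selections are needed later. I expect no genuine obstacle beyond checking that $\cA$ being closed is used to upgrade relative closedness to closedness in $\mathbb{A}$.
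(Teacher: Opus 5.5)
Your proof is correct and follows the paper's argument: both write $\cF_{\mathfrak{N}}^{\varepsilon}(x)$ and its graph as sublevel sets of $d(x,a)=\dist_{\mathcal{Z}}(\bphi_{\mathfrak{N}}(x,a),\cK_{\mathfrak{N}})$, using continuity in $a$ for closedness and joint measurability for the graph. You also spell out that closedness of $\cA$ turns relative closedness into closedness in $\mathbb{A}$, a step the paper's sketch leaves implicit.
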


\begin{proof}[Proof sketch]
Let $d_{\mathfrak{N}}(x,a)=\dist_{\mathcal{Z}_{\mathfrak{N}}}(\bphi_{\mathfrak{N}}(x,a),\cK_{\mathfrak{N}})$. 
For fixed $x$, continuity of $\bphi_{\mathfrak{N}}(x,\cdot)$ and of distance to a closed set makes $d_{\mathfrak{N}}(x,\cdot)$ continuous, so $\cF_{\mathfrak{N}}^{\varepsilon}(x)=\{a\in\cA:d_{\mathfrak{N}}(x,a)\le\varepsilon\}$ is closed and $\cF_{\mathfrak{N}}(x)=\cF_{\mathfrak{N}}^{0}(x)$. 
Joint measurability of $\bphi_{\mathfrak{N}}$ makes $d_{\mathfrak{N}}$ measurable; hence $\operatorname{Gr}(\cF_{\mathfrak{N}}^{\varepsilon})=\{(x,a)\in\cX\times\cA:d_{\mathfrak{N}}(x,a)\le\varepsilon\}$ is measurable, including the exact case $\varepsilon=0$.
\end{proof}

Equality constraints use $\bphi=h$ and $\cK=\{0\}$, inequalities $g(x,a)\le0$ use $\bphi=g$ and $\cK=\R_-^m$, and conic constraints choose the corresponding cone. 
Discrete constraints may use a continuous relaxation during learning, but exact feasibility applies only after rounding or repair returns the executed action to $\cF_{\mathfrak{N}}(x)$; otherwise the policy is only $\varepsilon$-compatible. 
Long-term constraints may augment $x$ with a virtual state and include its drift as a residual block, but a stability or drift argument is still required for a long-term guarantee~\cite{neely2010stochastic}.

We now lift action feasibility to policy support: the executed policy must assign probability one to the state-dependent feasible set rather than merely penalizing violations.

\begin{definition}[Network-compatible policy]
A Markov policy $\pi(da\mid x)$ is exactly network-compatible with $\mathfrak{N}$ if $\pi(\cF_{\mathfrak{N}}(x)\mid x)=1$ for every $x\in\cX$; let $\Pi_{\mathfrak{N}}$ denote all such policies. 
For $\varepsilon\ge0$, it is $\varepsilon$-network-compatible if $\pi(\cF_{\mathfrak{N}}^{\varepsilon}(x)\mid x)=1$ for every $x$, and $\Pi_{\mathfrak{N}}^{\varepsilon}$ denotes this relaxed class.
\end{definition}

Thus a policy may randomize within $\cF_{\mathfrak{N}}(x)$, while $\Pi_{\mathfrak{N}}^{\varepsilon}$ models approximate transport or training tolerances before optional repair.

\begin{lemma}[Almost-sure execution feasibility]
\label{lem:as_execution_feasibility}
Along trajectories generated by $(\mu_0,\pi,P)$, every $\pi\in\Pi_{\mathfrak{N}}$ satisfies $\bphi_{\mathfrak{N}}(x_t,a_t)\in\cK_{\mathfrak{N}}$ for all $t\ge0$ almost surely. 
Every $\pi\in\Pi_{\mathfrak{N}}^{\varepsilon}$ similarly satisfies $\dist_{\mathcal{Z}_{\mathfrak{N}}}(\bphi_{\mathfrak{N}}(x_t,a_t),\cK_{\mathfrak{N}})\le\varepsilon$ for all $t\ge0$ almost surely.
\end{lemma}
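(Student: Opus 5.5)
The plan is to build the trajectory law explicitly and then reduce the pathwise statement to a countable union of per-time null events. By the Ionescu--Tulcea theorem, $(\mu_0,\pi,P)$ determines a unique probability measure $\mathbb{P}^{\pi}_{\mu_0}$ on $(\cX\times\cA)^{\mathbb{N}}$. Under this measure $x_0\sim\mu_0$, $a_t\mid h_t\sim\pi(\cdot\mid x_t)$, and $x_{t+1}\mid (h_t,a_t)\sim P(\cdot\mid x_t,a_t)$. This requires $\pi$ to be a Markov kernel from $\cX$ to $\cA$, which is implicit in the definition of a Markov policy. For each $t$, let $E_t=\{(x_t,a_t)\notin\operatorname{Gr}(\cF_{\mathfrak{N}})\}$. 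By Lemma~\ref{lem:well_posed_feasible_sets}, the graph $\operatorname{Gr}(\cF_{\mathfrak{N}})$ is a measurable subset of $\cX\times\cA$, so each $E_t$ is an event.

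The key step is to show $\mathbb{P}^{\pi}_{\mu_0}(E_t)=0$ for each fixed $t$. I will condition on $x_t$, or equivalently disintegrate the joint law of $(x_t,a_t)$ as $\nu_t(dx)\,\pi(da\mid x)$, where $\nu_t$ is the marginal of $x_t$. This gives
\begin{equation*}
\mathbb{P}^{\pi}_{\mu_0}(E_t)=\int_{\cX}\pi\bigl(\cA\setminus\cF_{\mathfrak{N}}(x)\mid x\bigr)\,\nu_t(dx)=\int_{\cX}0\,\nu_t(dx)=0 .
\end{equation*}
The inner quantity is measurable in $x$ because the graph is measurable and $\pi$ is a kernel; this uses the standard section-measurability property of kernels acting on jointly measurable sets. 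It vanishes identically because $\pi\in\Pi_{\mathfrak{N}}$ requires $\pi(\cF_{\mathfrak{N}}(x)\mid x)=1$ for \emph{every} $x$, not merely $\nu_t$-almost every $x$. This pointwise requirement is what makes the argument independent of the unknown state distribution $\nu_t$ induced by $P$.

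Countable subadditivity then gives $\mathbb{P}^{\pi}_{\mu_0}\bigl(\bigcup_{t\ge0}E_t\bigr)\le\sum_t 0=0$. Hence with probability one we have $\bphi_{\mathfrak{N}}(x_t,a_t)\in\cK_{\mathfrak{N}}$ simultaneously for all $t\ge0$, and the order of quantifiers in the statement is the one we want.

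The $\varepsilon$-relaxed claim follows verbatim. We replace $\cF_{\mathfrak{N}}$ with $\cF^{\varepsilon}_{\mathfrak{N}}$, whose graph $\{d_{\mathfrak{N}}\le\varepsilon\}$ is also measurable by Lemma~\ref{lem:well_posed_feasible_sets}, and we use $\pi(\cF^{\varepsilon}_{\mathfrak{N}}(x)\mid x)=1$. The only delicate point is measurability: the integrand $x\mapsto\pi(\cA\setminus\cF_{\mathfrak{N}}(x)\mid x)$ must be measurable, and $E_t$ must be an event. Both are settled by the measurable-graph conclusion of Lemma~\ref{lem:well_posed_feasible_sets} together with the kernel property of $\pi$. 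The rest is routine.
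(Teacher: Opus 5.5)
Your proof is correct and follows the paper's argument. The support condition gives probability one at each fixed $t$, a countable intersection over $t\ge0$ preserves probability one, and the $\varepsilon$ case follows by replacing $\cF_{\mathfrak{N}}$ with $\cF^{\varepsilon}_{\mathfrak{N}}$. You supply details the paper's sketch leaves implicit: the Ionescu--Tulcea construction, measurability of the events via Lemma~\ref{lem:well_posed_feasible_sets}, and the disintegration $\nu_t(dx)\,\pi(da\mid x)$, which shows why the every-$x$ support condition suffices regardless of the induced state law.
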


\begin{proof}[Proof sketch]
For each fixed $t$, the support condition makes the corresponding feasibility event have probability one. 
The intersection over the countable index set $t\ge0$ still has probability one; replacing $\cF_{\mathfrak{N}}$ by $\cF_{\mathfrak{N}}^{\varepsilon}$ proves the relaxed statement.
\end{proof}

Given an initial distribution $\mu_0$ on $\cX$ and $\gamma\in[0,1)$, the exact network-constrained RL problem is
\begin{equation}
\label{eq:nfg_objective}
    \max_{\pi\in\Pi_{\mathfrak{N}}}
    J_{\mathfrak{N}}(\pi)
    =
    \E_{\mu_0,\pi,P}
    \left[
        \sum_{t=0}^{\infty}\gamma^t U(x_t,a_t)
    \right].
\end{equation}
By Lemma~\ref{lem:as_execution_feasibility}, the policy constraint enforces almost-sure feasible execution at every time.

Approximate implementations either optimize over $\Pi_{\mathfrak{N}}^{\varepsilon}$, with residual distance quantifying the remaining error, or repair before execution; only the repaired action supports an exact-feasibility claim.  NFG therefore places feasibility in the support of the executed policy itself; reward penalties, Lagrange costs, and post-hoc violation scores act only on realized violations and leave that support unconstrained.

\section{Network Feasibility Geometry}
\label{sec:nfg}

This section turns $\bphi_{\mathfrak{N}}(x,a)\in\cK_{\mathfrak{N}}$ into a differentiable action-generation interface, deriving its local geometry, pushforward policy, and KKT-filtered actor gradients.

\subsection{Local Tangent and Normal Geometry}

For feasible $(x,a)$, let $J_{\Phi}(x,a)=D_a\bphi_{\mathfrak{N}}(x,a)$. For a closed set $S$, $T_S$ and $N_S$ denote Bouligand tangents and limiting normals; for convex $S$, these are the standard convex-analytic cones~\cite{rockafellar1998variational}. Nonsmooth primitives use the generalized Jacobian selected in Assumption~\ref{ass:basic_regularity}.

Define the feasible first-order directions and pulled-back normals by
\begin{align}
\label{eq:tangent}
\cT_{\mathfrak{N}}(x,a)
&=
\left\{
v\in T_{\cA}(a):
J_{\Phi}(x,a)v
\in
T_{\cK_{\mathfrak{N}}}
\bigl(\bphi_{\mathfrak{N}}(x,a)\bigr)
\right\},
\\
\label{eq:normal}
\cN_{\mathfrak{N}}(x,a)
&=
N_{\cA}(a)
+
J_{\Phi}(x,a)^\top
N_{\cK_{\mathfrak{N}}}
\bigl(\bphi_{\mathfrak{N}}(x,a)\bigr).
\end{align}
The first set preserves feasibility to first order; the second pulls residual normals into action coordinates through $J_{\Phi}^{\top}$.

Exact identities require the following local regularity.

\begin{assumption}[Local cone regularity]
\label{ass:local_cone_calculus}
Whenever exact cone identities are invoked, $\cF_{\mathfrak{N}}(x)=\cA\cap\bphi_{\mathfrak{N}}(x,\cdot)^{-1}(\cK_{\mathfrak{N}})$ satisfies the tangent intersection/chain and normal sum/pullback rules. It suffices that $\cA$ and $\cK_{\mathfrak{N}}$ are Clarke regular and the only $\lambda\in N_{\cK_{\mathfrak{N}}}(\bphi_{\mathfrak{N}}(x,a))$ satisfying $-J_{\Phi}^{\top}\lambda\in N_{\cA}(a)$ is $0$; for smooth inequalities this is MFCQ~\cite{rockafellar1998variational}. Otherwise, \eqref{eq:tangent}--\eqref{eq:normal} are first-order surrogates.
\end{assumption}

\begin{lemma}[Exact local feasibility geometry]
\label{lem:local_cone_identity}
Under Assumption~\ref{ass:local_cone_calculus}, every regular feasible pair
$(x,a)$ satisfies
$T_{\cF_{\mathfrak{N}}(x)}(a)=\cT_{\mathfrak{N}}(x,a)$ and
$N_{\cF_{\mathfrak{N}}(x)}(a)=\cN_{\mathfrak{N}}(x,a)$.
\end{lemma}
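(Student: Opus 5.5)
The plan is to write the feasible set as a preimage intersected with a closed set, $\cF_{\mathfrak{N}}(x)=\cA\cap G^{-1}(\cK_{\mathfrak{N}})$ with $G=\bphi_{\mathfrak{N}}(x,\cdot)$, and then apply the standard variational-analysis calculus rules for tangent and normal cones~\cite{rockafellar1998variational}. Equivalently, one can use the product reformulation: $\cF_{\mathfrak{N}}(x)=\{a: F(a)\in \cA\times\cK_{\mathfrak{N}}\}$ with $F(a)=(a,G(a))$. Then $DF(a)=(I,J_{\Phi})$, so a single chain rule for preimages handles both the intersection and the pullback at once.

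\textbf{Step 1 (normal cone).} I would invoke the preimage normal-cone rule (Rockafellar--Wets, Thm.~6.14). If $D=\cA\times\cK_{\mathfrak{N}}$ is closed and the only $y\in N_D(F(a))$ with $DF(a)^\top y=0$ is $y=0$, then $N_{\cF}(a)\subseteq DF(a)^\top N_D(F(a))$. Closedness of $D$ follows from Assumption~\ref{ass:basic_regularity}. For the product set, $N_D(a,z)=N_{\cA}(a)\times N_{\cK_{\mathfrak{N}}}(z)$ (Prop.~6.41), so $DF^\top(\nu,\lambda)=\nu+J_{\Phi}^\top\lambda$. The qualification condition $\nu+J_\Phi^\top\lambda=0$ with $\nu\in N_{\cA}(a)$ reads $-J_\Phi^\top\lambda\in N_{\cA}(a)$. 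This forces $\lambda=0$ by Assumption~\ref{ass:local_cone_calculus}, and then $\nu=0$. Hence $N_{\cF}(a)\subseteq\cN_{\mathfrak{N}}(x,a)$. Under Clarke regularity of $\cA$ and $\cK_{\mathfrak{N}}$, $D$ is regular and the same theorem yields equality together with regularity of $\cF_{\mathfrak{N}}(x)$ at $a$.

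\textbf{Step 2 (tangent cone).} The inclusion $T_{\cF}(a)\subseteq\cT_{\mathfrak{N}}(x,a)$ is elementary. Given $a_k=a+t_kv_k\in\cF$ with $t_k\downarrow0$ and $v_k\to v$, we have $v\in T_{\cA}(a)$. Differentiability then gives $G(a_k)=G(a)+t_kJ_\Phi v_k+o(t_k)\in\cK_{\mathfrak{N}}$, so $J_\Phi v\in T_{\cK_{\mathfrak{N}}}(G(a))$. For the reverse inclusion, I would use the second half of Thm.~6.14: under the qualification condition and regularity of $D$, one has $T_{\cF}(a)=\{v:DF(a)v\in T_D(F(a))\}$. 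Then $T_D=T_{\cA}\times T_{\cK_{\mathfrak{N}}}$ for regular factors, which gives exactly \eqref{eq:tangent}. As a consistency check, polarity of the regular cones $T_{\cF}$ and $N_{\cF}$ matches the polarity between \eqref{eq:tangent} and \eqref{eq:normal}.

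\textbf{Main obstacle.} The hard part is the reverse tangent inclusion, which needs a metric-regularity (Robinson-type) estimate $\dist(a',\cF)\le\kappa\,\dist(F(a'),D)$ near $a$. This estimate is where the qualification condition is genuinely used, through the Mordukhovich criterion. I would cite it rather than reprove it.

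A further subtlety is the nonsmooth case, where $J_\Phi$ is only a selected generalized-Jacobian element and the chain rule need not hold for an arbitrary selection. I would therefore state the identity for $C^1$ (or strictly differentiable) $G$ at "regular feasible pairs". Outside that setting, the paper's own assumption covers it: the cone calculus is assumed to hold, and otherwise the formulas are only surrogates.
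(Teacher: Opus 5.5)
Your proposal is correct and follows essentially the paper's route: you apply the Rockafellar--Wets inverse-image/intersection calculus for tangent and normal cones to $\cA\cap\bphi_{\mathfrak{N}}(x,\cdot)^{-1}(\cK_{\mathfrak{N}})$. Your product reformulation $a\mapsto(a,\bphi_{\mathfrak{N}}(x,a))$ makes the qualification condition in Assumption~\ref{ass:local_cone_calculus} exactly the one in the preimage theorem. The step you flag as the main obstacle is not one: once the normal identity and Clarke regularity of $\cF_{\mathfrak{N}}(x)$ at $a$ are in hand, $T_{\cF_{\mathfrak{N}}(x)}(a)=N_{\cF_{\mathfrak{N}}(x)}(a)^{\circ}$ and $\bigl(N_{\cA}(a)+J_{\Phi}^{\top}N_{\cK_{\mathfrak{N}}}(\bphi_{\mathfrak{N}}(x,a))\bigr)^{\circ}=\cT_{\mathfrak{N}}(x,a)$ (by regularity of $\cA$ and $\cK_{\mathfrak{N}}$) give the reverse tangent inclusion directly, so your polarity ``consistency check'' is already the proof and no separate metric-regularity estimate is needed.
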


\begin{proof}[Proof sketch]
The tangent inverse-image/intersection rules and normal pullback/sum rules applied to $\cA\cap\bphi_{\mathfrak{N}}(x,\cdot)^{-1}(\cK_{\mathfrak{N}})$ give both identities.
\end{proof}

\subsection{Proto-Policy and Pushforward Policy}

NFG separates proposal from execution: $u\sim\rho_{\theta}(\cdot\mid x)$ is mapped by measurable $\Gamma_{\mathfrak{N},x}:\mathbb{A}\to\cA$. The map is exact if $\Gamma_{\mathfrak{N},x}(u)\in\cF_{\mathfrak{N}}(x)$ almost surely, and $\varepsilon$-feasible with $\cF_{\mathfrak{N}}^{\varepsilon}(x)$. The executed policy is
\begin{equation}
\label{eq:pushforward}
\pi_{\theta}^{\mathfrak{N}}(\cdot\mid x)=(\Gamma_{\mathfrak{N},x})_{\#}\rho_{\theta}(\cdot\mid x),
\end{equation}
i.e., $\pi_{\theta}^{\mathfrak{N}}(B\mid x)=\rho_{\theta}(\Gamma_{\mathfrak{N},x}^{-1}(B)\mid x)$.

\begin{lemma}[Exact transport implies network compatibility]
\label{lem:transport_compatibility}
Exact transport at every state implies
$\pi_{\theta}^{\mathfrak{N}}\in\Pi_{\mathfrak{N}}$;
$\varepsilon$-feasible transport implies
$\pi_{\theta}^{\mathfrak{N}}\in\Pi_{\mathfrak{N}}^{\varepsilon}$.
\end{lemma}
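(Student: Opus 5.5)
The argument is a direct computation from the definition of the pushforward \eqref{eq:pushforward}, combined with the measurability supplied by Lemma~\ref{lem:well_posed_feasible_sets}.

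\textbf{Step 1: the relevant sets are measurable.} Fix a state $x$. By Lemma~\ref{lem:well_posed_feasible_sets}, the sets $\cF_{\mathfrak{N}}(x)$ and $\cF_{\mathfrak{N}}^{\varepsilon}(x)$ are closed, hence Borel subsets of $\cA$. Since $\Gamma_{\mathfrak{N},x}$ is measurable, the preimages $\Gamma_{\mathfrak{N},x}^{-1}(\cF_{\mathfrak{N}}(x))$ and $\Gamma_{\mathfrak{N},x}^{-1}(\cF_{\mathfrak{N}}^{\varepsilon}(x))$ are measurable subsets of $\mathbb{A}$. Consequently the quantity $\pi_{\theta}^{\mathfrak{N}}(\cF_{\mathfrak{N}}(x)\mid x)$ is well defined.

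\textbf{Step 2: evaluate the pushforward on the feasible set.} By \eqref{eq:pushforward},
\[
\pi_{\theta}^{\mathfrak{N}}(\cF_{\mathfrak{N}}(x)\mid x)=\rho_{\theta}\bigl(\{u:\Gamma_{\mathfrak{N},x}(u)\in\cF_{\mathfrak{N}}(x)\}\mid x\bigr).
\]
Exactness of the transport means precisely that the event on the right has $\rho_\theta(\cdot\mid x)$-probability one. Since $x$ was arbitrary, the support condition holds at every state. The relaxed case is word-for-word the same, with $\cF_{\mathfrak{N}}^{\varepsilon}(x)$ in place of $\cF_{\mathfrak{N}}(x)$.

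\textbf{Step 3: the pushforward is an admissible Markov policy.} This is the only non-trivial point. Membership in $\Pi_{\mathfrak{N}}$ implicitly requires $\pi_\theta^{\mathfrak{N}}$ to be a Markov kernel, i.e.\ $x\mapsto \pi_\theta^{\mathfrak{N}}(B\mid x)$ must be measurable. I would assume, as the construction intends, that $\rho_\theta$ is a kernel and that $(x,u)\mapsto\Gamma_{\mathfrak{N},x}(u)$ is jointly measurable. Then
\[
\pi_\theta^{\mathfrak{N}}(B\mid x)=\int \mathbf 1_B\bigl(\Gamma_{\mathfrak{N},x}(u)\bigr)\,\rho_\theta(du\mid x)
\]
is measurable in $x$ by the standard kernel--function composition result (a monotone class argument).

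With these three steps, $\pi_\theta^{\mathfrak{N}}\in\Pi_{\mathfrak{N}}$ under exact transport and $\pi_\theta^{\mathfrak{N}}\in\Pi_{\mathfrak{N}}^{\varepsilon}$ under $\varepsilon$-feasible transport. Lemma~\ref{lem:as_execution_feasibility} then immediately upgrades either conclusion to trajectory-wise almost-sure feasibility.
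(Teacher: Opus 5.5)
Your proof is correct and uses the same argument as the paper's sketch: by \eqref{eq:pushforward}, $\pi_{\theta}^{\mathfrak{N}}(\cF_{\mathfrak{N}}(x)\mid x)$ is the proto-policy probability of the preimage $\Gamma_{\mathfrak{N},x}^{-1}(\cF_{\mathfrak{N}}(x))$, which exactness makes equal to one, and the same holds with $\cF_{\mathfrak{N}}^{\varepsilon}(x)$ in place of $\cF_{\mathfrak{N}}(x)$. Your Steps~1 and~3 add measurability bookkeeping that the paper leaves implicit; note that Step~3 needs joint measurability of $(x,u)\mapsto\Gamma_{\mathfrak{N},x}(u)$, which the lemma does not state but which is consistent with the measurable-selection remark in Lemma~\ref{lem:well_posed_transport}.
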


\begin{proof}[Proof sketch]
Pushforward assigns probability one to $\cF_{\mathfrak{N}}(x)$ (or $\cF_{\mathfrak{N}}^{\varepsilon}(x)$) because its preimage has proto-policy probability one.
\end{proof}

\begin{figure}[t]
\centering
\resizebox{\columnwidth}{!}{%
\begin{tikzpicture}[node distance=4.5mm,>=Latex,thick, every node/.style={transform shape}]
\tikzstyle{box}=[draw,rounded corners,align=center,minimum height=8mm,minimum width=25mm]
\node[box] (x) {network\\state $x$};
\node[box,right=of x] (proto) {proto-policy\\$u\sim\rho_\theta(\cdot\mid x)$};
\node[box,right=of proto] (trans) {feasibility\\transport $\Gamma$};
\node[box,right=of trans] (act) {executed\\action $a$};
\node[box,below=of trans,minimum width=28mm] (geom) {network constraints\\$\bphi(x,a)\in\cK$};
\node[box,below=of proto,minimum width=28mm] (critic) {critic signal\\$g_\psi$ or $\nabla_a Q_\psi$};
\draw[->] (x) -- (proto);
\draw[->] (proto) -- (trans);
\draw[->] (trans) -- (act);
\draw[->] (geom) -- node[right,font=\scriptsize] {$J_\Phi,T_\cK,N_\cK$} (trans);
\draw[->] (critic) -- (trans);
\draw[->,dashed] (trans.north) to[bend right=18] node[above,font=\scriptsize] {implicit gradient} (proto.north);
\end{tikzpicture}%
}
\caption{Proto-actions are transported to $\cF_{\mathfrak{N}}(x)$ and differentiated through the feasibility geometry.}
\label{fig:framework}
\end{figure}

\subsection{Feasibility Transport as a Variational Operator}

Given $u$, choose a reference $\bar a(x,u)$ in the domain of $Q_{\psi}$ and $\bphi_{\mathfrak{N}}$, and a direction $g_{\psi}(x,u)$, typically $\nabla_aQ_{\psi}(x,a)|_{a=\bar a(x,u)}$. NFG selects a nearby feasible action by
\begin{align}
\label{eq:transport}
\Gamma_{\mathfrak{N},x}(u)
&\in
\operatorname*{arg\,max}_{a\in\cF_{\mathfrak{N}}(x)}
\Psi_x(u,a),
\\
\label{eq:transport_objective}
\Psi_x(u,a)
&=
\langle g_{\psi}(x,u),a-u\rangle
-
\frac{1}{2\eta}
\|a-u\|^2_{M_{\mathfrak{N}}(x,u)},
\end{align}
where $\eta>0$, $M_{\mathfrak{N}}(x,u)\succ0$, and $\|v\|_M^2=\langle v,Mv\rangle$. Completing the square makes \eqref{eq:transport} the $M_{\mathfrak{N}}$-metric projection of $\tilde u_{\psi}=u+\eta M_{\mathfrak{N}}^{-1}g_{\psi}$ onto $\cF_{\mathfrak{N}}(x)$~\cite{boyd2004convex,amos2017optnet,agrawal2019differentiable}.

A canonical metric is
\begin{equation}
\label{eq:metric}
M_{\mathfrak{N}}(x,u)=I+\varsigma J_{\Phi}(x,\bar a)^{\top}W(x,u)J_{\Phi}(x,\bar a),
\end{equation}
where $W\succeq0$ weights residual sensitivities and $\varsigma\ge0$ controls anisotropy. Use $\bar a=u$ only when both maps are defined there; otherwise use a clipped, decoded, feasible, or solver-iterate reference. Thus $g_{\psi}=0$ gives network-aware projection, while a critic gradient gives local feasible improvement. Nonconvex or discrete relaxations require final rounding or repair for exact feasibility.

\begin{lemma}[Well-posed local transport]
\label{lem:well_posed_transport}
Fix $x,u$. If $\cF_{\mathfrak{N}}(x)$ is nonempty and closed, $M_{\mathfrak{N}}\succ0$, and $M_{\mathfrak{N}},g_{\psi}$ are independent of $a$, then \eqref{eq:transport} has a maximizer, unique for convex $\cF_{\mathfrak{N}}(x)$. Otherwise, the measurable-graph property and Carath\'eodory objective give a measurable maximizer selection~\cite[Ch.~14]{rockafellar1998variational}.
\end{lemma}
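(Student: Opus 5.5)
The plan is to reduce \eqref{eq:transport} to a metric projection and then use coercivity. Fix $x,u$, and write $M=M_{\mathfrak{N}}(x,u)\succ0$ and $g=g_{\psi}(x,u)$; by hypothesis neither depends on $a$. Completing the square gives
\begin{equation*}
\Psi_x(u,a)=\tfrac{\eta}{2}\|M^{-1}g\|_M^2-\tfrac{1}{2\eta}\|a-\tilde u_{\psi}\|_M^2 ,
\qquad \tilde u_{\psi}=u+\eta M^{-1}g .
\end{equation*}
Maximizing $\Psi_x(u,\cdot)$ over $\cF_{\mathfrak{N}}(x)$ is therefore the same as minimizing $\|a-\tilde u_{\psi}\|_M$ over $\cF_{\mathfrak{N}}(x)$. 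By Lemma~\ref{lem:well_posed_feasible_sets}, this set is closed in $\cA$, and since $\cA$ is closed in $\mathbb{A}$ (Assumption~\ref{ass:basic_regularity}), it is closed in $\mathbb{A}$.

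\emph{Existence.} The norm $\|\cdot\|_M$ is equivalent to the Euclidean norm, because $\lambda_{\min}(M)\|v\|^2\le\|v\|_M^2\le\lambda_{\max}(M)\|v\|^2$. Pick any $a_0\in\cF_{\mathfrak{N}}(x)$ and set $r=\|a_0-\tilde u_{\psi}\|_M$. The sublevel set $\{a\in\cF_{\mathfrak{N}}(x):\|a-\tilde u_{\psi}\|_M\le r\}$ is nonempty, closed, and bounded in the finite-dimensional space $\mathbb{A}$, hence compact. The objective is continuous, so Weierstrass gives a minimizer, which is a maximizer of \eqref{eq:transport}.

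\emph{Uniqueness for convex sets.} The map $a\mapsto\|a-\tilde u_{\psi}\|_M^2$ is strictly convex because $M\succ0$. If $a_1\neq a_2$ were both minimizers in a convex $\cF_{\mathfrak{N}}(x)$, their midpoint would lie in the set and have strictly smaller value, a contradiction. Equivalently, this is uniqueness of the Hilbert-space projection in the inner product $\langle\cdot,M\cdot\rangle$.

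\emph{Measurable selection.} For the remaining clause I would view $(x,u)\mapsto\cF_{\mathfrak{N}}(x)$ as a closed-valued multifunction on $\cX\times\mathbb{A}$. Its graph is measurable by Lemma~\ref{lem:well_posed_feasible_sets}, after pulling back along the projection $(x,u)\mapsto x$. The objective $\Psi_x(u,a)$ is Carath\'eodory: it is continuous in $a$, and measurable in $(x,u)$ provided $g_{\psi}$ and $M_{\mathfrak{N}}$ are measurable, which I would state as a standing assumption. With $\cX$ standard Borel and $\mathbb{A}$ Polish, a measurable closed-valued multifunction with measurable graph is normal in the sense of \cite[Ch.~14]{rockafellar1998variational}. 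The argmax multifunction of a normal integrand over a normal multifunction is then measurable. It is nonempty-valued by the existence step, which holds pointwise in $(x,u)$ and does not need convexity. Theorem~14.37 of that chapter, or Kuratowski--Ryll-Nardzewski, then yields a measurable selection $\Gamma_{\mathfrak{N},x}(u)$.

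The main obstacle is this last step. Measurable graph alone gives measurability of the multifunction only up to completion, so the cleanest route is to invoke the Aumann / von Neumann selection theorem and obtain a universally measurable selector. If Borel measurability is needed, I would instead use Lemma~\ref{lem:well_posed_feasible_sets} in the form that $d_{\mathfrak{N}}$ is Carath\'eodory, which makes $\cF_{\mathfrak{N}}$ Effros-measurable directly. The existence and uniqueness parts are routine.
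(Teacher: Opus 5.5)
Your proof is correct and follows the paper's route: complete the square to rewrite \eqref{eq:transport} as the $M_{\mathfrak{N}}$-metric projection of $\tilde u_{\psi}$, get existence from coercivity on the nonempty closed feasible set, and get uniqueness on convex sets from strict concavity; your compact-sublevel-set and Weierstrass step is simply the detailed form of the paper's ``tends to $-\infty$'' remark. The paper's proof says nothing about the measurable-selection clause beyond the citation to Rockafellar--Wets Ch.~14, so your normal-integrand argument via Theorem~14.37 supplies what the paper leaves implicit, as do your two observations: that $g_{\psi}$ and $M_{\mathfrak{N}}$ must be measurable in $(x,u)$, and that measurability of $\cF_{\mathfrak{N}}$ is best obtained from the Carath\'eodory distance $d_{\mathfrak{N}}$ rather than from graph measurability alone.
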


\begin{proof}[Proof sketch]
With $\tilde u_{\psi}$ defined above,
$\Psi_x(u,a)
=-(2\eta)^{-1}
\|a-\tilde u_{\psi}(x,u)\|_{M_{\mathfrak{N}}(x,u)}^2+\mathrm{const}$.
This continuous objective tends to $-\infty$ as $\|a\|\to\infty$, so it
attains a maximum on the nonempty closed feasible set; strict concavity
gives uniqueness on a convex feasible set.
\end{proof}

At a regular solution $a^\star=\Gamma_{\mathfrak{N},x}(u)$, stationarity gives
\begin{equation}
\label{eq:kkt_compact}
0\in \eta^{-1}M_{\mathfrak{N}}(x,u)(a^\star-u)-g_{\psi}(x,u)+N_{\cF_{\mathfrak{N}}(x)}(a^\star).
\end{equation}
By Lemma~\ref{lem:local_cone_identity}, there are $\nu\in N_{\cA}(a^\star)$ and $\lambda\in N_{\cK_{\mathfrak{N}}}(\bphi_{\mathfrak{N}}(x,a^\star))$ such that
$
0=\eta^{-1}M_{\mathfrak{N}}(a^\star-u)-g_{\psi}+\nu+J_{\Phi}(x,a^\star)^\top\lambda.
$
Thus $J_{\Phi}^{\top}\lambda$ pulls residual normals into action coordinates; $\lambda$ is nonnegative on active inequalities, zero on inactive ones, and unrestricted for equalities. With $g_{\psi}=0$, transport fixes every feasible $u$, so it excludes no network-compatible policy representable by the proto-policy family.

\subsection{Backward Gradient Through Transport}

Let $u=u_{\theta}(x,\epsilon)$, $\epsilon\sim p$, and $a=\Gamma_{\mathfrak{N},x}(u)$. For $\mathcal{J}_{\mathrm{act}}(\theta)=\E_{x\sim d,\epsilon}[Q_{\psi}(x,\Gamma_{\mathfrak{N},x}(u_{\theta}(x,\epsilon)))]$, differentiable points satisfy
\begin{equation}
\label{eq:transport_chain_rule}
\nabla_{\theta}Q_{\psi}(x,a)=\left(\frac{\partial u_{\theta}}{\partial\theta}\right)^\top(D_u\Gamma_{\mathfrak{N},x})^\top\nabla_aQ_{\psi}(x,a).
\end{equation}

Averaging this identity over $(x,\epsilon)$ yields $\nabla_{\theta}\mathcal{J}_{\mathrm{act}}(\theta)$. The sensitivity $D_u\Gamma_{\mathfrak{N},x}$ depends on the active constraints, $M_{\mathfrak{N}}$, and $J_{\Phi}$, and can be computed by implicit KKT differentiation, solver unrolling, or a differentiable
optimization layer%
~\cite{agrawal2019differentiable,blondel2022efficient,donti2017task}.
At active-set changes, generalized Jacobians, subgradients, or straight-through estimators approximate this geometry-filtered gradient.

\subsection{Effect on the Executed Action Distribution}

Assume a locally stable affine active set and locally constant $M=M_{\mathfrak{N}}(x,u)$ and $g_{\psi}$. If $J_A$ stacks active constraints, the local transport sensitivity is
\begin{equation}
\label{eq:projection}
P_T^M=I-M^{-1}J_A^\top(J_AM^{-1}J_A^\top)^\dagger J_A,
\end{equation}
the $M$-orthogonal projector onto $\{v:J_Av=0\}$. Hence $\Cov[a\mid x]\approx P_T^M\Sigma_{\theta}(P_T^M)^\top$; Proposition~\ref{prop:covariance} formalizes this contraction of active normal directions.

\section{Residual Constraint Primitives}
\label{sec:primitives}

Heterogeneous network constraints compile into typed residual blocks as follows. Their product defines the global inclusion $\bphi_{\mathfrak{N}}(x,a)\in\cK_{\mathfrak{N}}$, while the stacked Jacobian preserves blockwise information for feasibility transport and actor updates.

Let $\mathcal{L}$ be a finite set of selected primitive types.
For each $\ell\in\mathcal{L}$, let
$\bphi^{\ell}:\cX\times\cA\to\mathcal{Z}^{\ell}$ and
$\cK^{\ell}\subseteq\mathcal{Z}^{\ell}$ denote its residual map and closed admissible set.
Define
$
\mathcal{Z}_{\mathfrak{N}}
=\prod_{\ell\in\mathcal{L}}\mathcal{Z}^{\ell},
\qquad
\bphi_{\mathfrak{N}}(x,a)
=\bigl(\bphi^{\ell}(x,a)\bigr)_{\ell\in\mathcal{L}},
\qquad
\cK_{\mathfrak{N}}
=\prod_{\ell\in\mathcal{L}}\cK^{\ell}.
$
Then $\bphi_{\mathfrak{N}}(x,a)\in\cK_{\mathfrak{N}}$ iff
$\bphi^{\ell}(x,a)\in\cK^{\ell}$ for every $\ell$.
A block-scaled product norm prevents raw rate, delay, queue, reliability, and power units from being mixed in the residual distance.
If the blocks are differentiable in $a$, write
$J_{\Phi}^{\ell}(x,a)=D_a\bphi^{\ell}(x,a)$ and compile
$J_{\Phi}(x,a)=\operatorname{col}_{\ell\in\mathcal{L}}J_{\Phi}^{\ell}(x,a)$.

\begin{lemma}[Primitive composition]
\label{lem:primitive_composition}
Suppose each $\cK^{\ell}$ is closed and $\bphi^{\ell}(x,\cdot)$ is continuous. Then $\cF_{\mathfrak N}(x)=\{a\in\cA:\bphi^{\ell}(x,a)\in\cK^{\ell},\ \forall\ell\in\mathcal L\}$. At regular feasible $a$ satisfying Assumption~\ref{ass:local_cone_calculus},
\begin{align}
\cT_{\mathfrak N}(x,a)
&=\left\{v\in T_{\cA}(a):\begin{array}{l}
J_{\Phi}^{\ell}(x,a)v\in T_{\cK^{\ell}}(\bphi^{\ell}(x,a)),\\[-0.2ex]
\forall\ell\in\mathcal L
\end{array}\right\},
\label{eq:product_tangent}\\[-0.3ex]
\cN_{\mathfrak N}(x,a)
&=N_{\cA}(a)+\sum_{\ell\in\mathcal L}
(J_{\Phi}^{\ell}(x,a))^{\top}
N_{\cK^{\ell}}(\bphi^{\ell}(x,a)).
\label{eq:product_normal}
\end{align}
\end{lemma}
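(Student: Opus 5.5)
The plan is to treat the product structure as a direct consequence of the definitions and then obtain the two cone formulas by splitting the tangent and normal cones of a product set into blocks. For the set identity, I will unfold \eqref{eq:feasible_set} with $\bphi_{\mathfrak N}=(\bphi^{\ell})_{\ell}$ and $\cK_{\mathfrak N}=\prod_\ell\cK^{\ell}$. Membership of a tuple in a Cartesian product is equivalent to membership of each coordinate, so $\bphi_{\mathfrak N}(x,a)\in\cK_{\mathfrak N}$ holds iff $\bphi^{\ell}(x,a)\in\cK^{\ell}$ for every $\ell$. I will also note two facts that put the composed system under the earlier lemmas. First, a finite product of closed sets is closed in the product norm. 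Second, $\bphi_{\mathfrak N}(x,\cdot)$ is continuous because each block is. Hence Assumption~\ref{ass:basic_regularity} holds for the composed system.

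For \eqref{eq:product_tangent}, I will start from the definition \eqref{eq:tangent} and use the product rule for Bouligand tangent cones of a finite product of closed sets at $z=(z^\ell)$. This rule gives $T_{\prod\cK^\ell}(z)\subseteq\prod_\ell T_{\cK^\ell}(z^\ell)$ in general, with equality when the factors are Clarke regular. That regularity is part of the sufficient condition in Assumption~\ref{ass:local_cone_calculus}; more generally, the assumption directly grants the intersection/chain rules for the blockwise description. Since $J_\Phi=\operatorname{col}_\ell J_\Phi^\ell$, the $\ell$-th block of $J_\Phi v$ is $J^\ell_\Phi v$. The condition $J_\Phi v\in T_{\cK_{\mathfrak N}}$ therefore splits into the blockwise conditions, which gives \eqref{eq:product_tangent}.

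For \eqref{eq:product_normal}, I will use the product rule for limiting normal cones, $N_{\prod\cK^\ell}(z)=\prod_\ell N_{\cK^\ell}(z^\ell)$. This rule holds for finite products of closed sets without extra hypotheses (Rockafellar--Wets, Prop.~6.41). Writing $\lambda=(\lambda^\ell)_\ell$, the column structure gives $J_\Phi^\top\lambda=\sum_\ell (J_\Phi^\ell)^\top\lambda^\ell$. Substituting this into \eqref{eq:normal} yields exactly \eqref{eq:product_normal}. Combined with Lemma~\ref{lem:local_cone_identity}, both expressions then equal the true tangent and normal cones of $\cF_{\mathfrak N}(x)$ at regular feasible points.

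The main obstacle is the tangent product rule, since for nonregular sets the Bouligand cone of a product can be strictly smaller than the product of the cones. I will handle this by invoking the Clarke regularity in Assumption~\ref{ass:local_cone_calculus}, under which products of regular sets are regular and the tangent cone factors. I will also check that the block-scaled product norm does not affect any cone: all norms on a finite-dimensional space are equivalent, and the cones depend only on the topology and the linear structure.
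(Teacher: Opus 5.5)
Your proposal is correct and follows essentially the same route as the paper. The paper's sketch likewise splits $T_{\cK_{\mathfrak N}}$ and $N_{\cK_{\mathfrak N}}$ by the product rules and then uses the block-column structure of $J_{\Phi}$; its remaining inverse-image/intersection/pullback/sum rules play the role of your final appeal to Lemma~\ref{lem:local_cone_identity}. One small fix in the tangent step, which you rightly flag as the crux and treat more carefully than the paper: Assumption~\ref{ass:local_cone_calculus} posits Clarke regularity of the product $\cK_{\mathfrak N}$, so you need the ``only if'' half of the equivalence in Rockafellar--Wets, Prop.~6.41 (product regular iff each factor regular), not ``products of regular sets are regular.'' That half is what passes regularity, and hence geometric derivability, down to each factor $\cK^{\ell}$.
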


\begin{proof}[Proof sketch]
Product tangent/normal rules followed by the inverse-image, intersection, pullback, and sum rules yield \eqref{eq:product_tangent}--\eqref{eq:product_normal}.
\end{proof}

Thus each blocked direction decomposes into action-set and primitive normal components, identifying the responsible block.

For a given problem, the designer selects the required blocks, while the transport layer receives only the compiled interface
$(\bphi_{\mathfrak{N}},\cK_{\mathfrak{N}},J_{\Phi})$.

\subsection{Long-Term Constraints as Lifted Dynamics}

Temporal budgets such as average delay, energy, loss, reliability, or stability can be lifted with a virtual state:
\begin{align}
&\limsup_{T\to\infty}\frac1T\sum_{t=0}^{T-1}\E[c(x_t,a_t)]\le b,
\label{eq:time_average_constraint}\\
&y_{t+1}=[y_t+c(x_t,a_t)-b]^+.
\label{eq:virtual_queue}
\end{align}

Hereafter, $x_t$ denotes the augmented Markov state containing $y_t$.
With $V(y)=\frac{1}{2}\|y\|^2$, define
$
\bphi^{\mathrm{dyn}}(x_t,a_t)
=
\E\!\left[
V(y_{t+1})-V(y_t)
\mid x_t,a_t
\right]
-
\delta_{\mathrm{dyn}}(x_t).
$
The inclusion $\bphi^{\mathrm{dyn}}(x_t,a_t)\in\R_-$ provides a one-step interface for accumulated constraint pressure, but does not imply
\eqref{eq:time_average_constraint} without virtual-state stability.

\begin{proposition}[Virtual-state implication]
\label{prop:virtual_queue_implication}
If \eqref{eq:virtual_queue} holds, $y_0$ and $c(x_t,a_t)$ are integrable, and $\E[\|y_T\|_1]/T\to0$, then \eqref{eq:time_average_constraint} holds componentwise.
\end{proposition}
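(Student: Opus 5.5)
The plan is the standard virtual-queue argument of Neely~\cite{neely2010stochastic}, carried out componentwise. Fix a component $i$ and write $y_{t,i}$, $c_i(x_t,a_t)$ and $b_i$ for the corresponding coordinates of \eqref{eq:virtual_queue}. The first step is the pathwise inequality
\[
[z]^+\ge z \quad\Longrightarrow\quad y_{t+1,i}\ge y_{t,i}+c_i(x_t,a_t)-b_i ,
\]
which holds for every $t$ and every sample path. No Lyapunov function or drift bound from the residual $\bphi^{\mathrm{dyn}}$ is needed here; the proposition only uses the recursion itself.

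Next, rearrange to $c_i(x_t,a_t)-b_i\le y_{t+1,i}-y_{t,i}$ and sum over $t=0,\dots,T-1$. The right-hand side telescopes, giving
\[
\sum_{t=0}^{T-1}c_i(x_t,a_t)-Tb_i\le y_{T,i}-y_{0,i}\le y_{T,i},
\]
where the last step uses $y_{0,i}\ge0$. If $y_0$ is only assumed integrable and not nonnegative, I would keep the $-y_{0,i}$ term instead; its expectation is finite and is killed by the division by $T$ later.

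Then take expectations. This is legitimate because $y_0$ and each $c(x_t,a_t)$ are integrable, so by induction each $y_t$ is integrable, since $|[z]^+|\le|z|$. Dividing by $T$ gives
\[
\frac1T\sum_{t=0}^{T-1}\E[c_i(x_t,a_t)]\le b_i+\frac{\E[y_{T,i}]-\E[y_{0,i}]}{T}.
\]
Because $y_{T,i}\ge0$ for $T\ge1$, we have $\E[y_{T,i}]\le\E[\|y_T\|_1]$. Also $\E[y_{0,i}]/T\to0$ since $\E[y_{0,i}]$ is finite. Taking $\limsup_{T\to\infty}$ and using the hypothesis $\E[\|y_T\|_1]/T\to0$ yields \eqref{eq:time_average_constraint} for component $i$.

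There is no real obstacle; the only care needed is measure-theoretic bookkeeping. First, every expectation in the telescoped inequality must be finite, which the induction on integrability provides. Second, the vector inequality must be read componentwise, with the coordinate bound passing through $\|\cdot\|_1$.
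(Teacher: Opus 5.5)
Your proposal is correct and is essentially the paper's own argument: the pathwise bound $[z]^+\ge z$, telescoping to $\sum_{t<T}c(x_t,a_t)\le Tb+y_T-y_0$, then taking expectations, dividing by $T$, and invoking mean-rate stability. Your induction showing each $y_t$ is integrable and the coordinate bound $\E[y_{T,i}]\le\E[\|y_T\|_1]$ make explicit the bookkeeping that the paper's sketch leaves implicit.
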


\begin{proof}[Proof sketch]
Since $[z]^+\ge z$, $\sum_{t<T}c(x_t,a_t)\le Tb+y_T-y_0$. Taking expectations, dividing by $T$, and using mean-rate stability proves \eqref{eq:time_average_constraint}.
\end{proof}

Thus, the virtual state Markovizes accumulated constraint pressure, while the original average guarantee follows only after mean-rate stability is established.

\section{Learning Algorithm}
\label{sec:algorithm}

NFG inserts transport between the proto-actor and environment; replay, targets, and critics use transported actions while the proto-policy remains arbitrary.

\begin{algorithm}[t]
\caption{NFG-RL with feasibility transport}
\label{alg:nfg}
\begin{algorithmic}[1]
\Require Compiler, proto-policy $\rho_{\theta}$, critic $Q_{\psi}$, targets, replay $\mathcal D$, transport $\Gamma_{\mathfrak N,x}$
\For{each environment step $t$}
    \State Observe $x_t$, compile the residual interface, sample $u_t\sim\rho_{\theta}(\cdot\mid x_t)$, and transport $a_t=\Gamma_{\mathfrak N,x_t}(u_t)$ using $g_t=\nabla_aQ_{\psi}(x_t,a)|_{a=\bar a_t}$.
    \State Execute $a_t$, observe $r_t=U(x_t,a_t)$ and $x_{t+1}$, and store $(x_t,a_t,r_t,x_{t+1})$ in $\mathcal{D}$.
    \State Sample minibatch $\mathcal{B}\subset\mathcal{D}$.
    \For{each $(x_i,a_i,r_i,x'_i)\in\mathcal{B}$}
        \State Compile at $x'_i$, sample $u'_i\sim\rho_{\bar\theta}(\cdot\mid x'_i)$, and compute $a'_i$ by \eqref{eq:transport} using $Q_{\bar\psi}$ and a target reference $\bar a'_i$.
        \State Set $\hat y_i=r_i+\gamma Q_{\bar\psi}(x'_i,a'_i)$.
    \EndFor
    \State Take a critic step on $\mathcal{J}_Q(\psi)=|\mathcal{B}|^{-1}\sum_{i\in\mathcal{B}}(Q_{\psi}(x_i,a_i)-\hat y_i)^2$.
    \State Ascend $|\mathcal{B}|^{-1}\sum_{i\in\mathcal{B}} Q_{\psi}(x_i,\Gamma_{\mathfrak{N},x_i} (u_{\theta}(x_i,\epsilon_i)))$ in $\theta$, with $\epsilon_i\sim p$, and update the target networks.
\EndFor
\end{algorithmic}
\end{algorithm}

The reference $\bar a_t$ avoids evaluating $Q_{\psi}$ or $J_{\Phi}$ at an infeasible proposal and may be clipped, decoded, previously feasible, or a solver iterate; set $\bar a_t=u_t$ only when both maps extend there. Replay and targets use transported actions. Because the implicit pushforward density is unavailable, NFG-RL uses the reparameterized gradient \eqref{eq:transport_chain_rule}, not $\log\pi_{\theta}^{\mathfrak N}$.

\subsection{Transport Implementations}

NFG supports four regimes through the same interface. \emph{Exact transport} differentiates a convex KKT solve and is feasible when the solver succeeds. \emph{Unrolled flow} applies $K$ correction steps, trading latency and memory for residual accuracy. \emph{Tangent transport} uses $P_T^M$ for cheap first-order feasibility near the feasible set or before repair. \emph{Hybrid discrete transport} trains on a relaxation and executes structured rounding plus deterministic repair. A KKT solve with $n$ actions and $m_{\mathrm{act}}$ active rows costs $O((n+m_{\mathrm{act}})^3)$ and reuses its factorization.

\section{Theoretical Properties}
\label{sec:theory}
By Lemmas~\ref{lem:transport_compatibility} and \ref{lem:as_execution_feasibility}, exact transport satisfies
$
\Pr\!\left(
a_t\in\cF_{\mathfrak{N}}(x_t)
\ \text{for all } t\ge0
\right)=1,
$
with the analogous statement for $\cF_{\mathfrak{N}}^{\varepsilon}$ under $\varepsilon$-feasible transport; the executed support is restricted before any reward is observed. The remaining results are local: active constraints contract exploration in normal directions, the critic tilt yields a first-order gain over plain projection, and the dynamic primitive recovers backpressure. They do not imply global actor optimality or remove function-approximation error.

\begin{proposition}[Active constraints reshape exploration]
\label{prop:covariance}
Consider a regular feasible point
$a^\star=\Gamma_{\mathfrak{N},x}(u^\star)$.
Assume LICQ and strict complementarity hold at $a^\star$, that the active constraint functions are affine near $a^\star$, so that the active set is locally stable and $\cF_{\mathfrak{N}}(x)$ locally coincides with the manifold $J_A(a-a^\star)=0$, and that $M_{\mathfrak{N}}(x,u)$ and $g_{\psi}(x,u)$ are locally constant near $u^\star$.
If the proto-action has local covariance $\Sigma_{\theta}$ in this
neighborhood, then, to first order in the proto-action perturbation,
$
\Cov[a\mid x]
\approx
P_T^M\Sigma_{\theta}(P_T^M)^\top,
$
where $P_T^M$ is defined in \eqref{eq:projection}.
\end{proposition}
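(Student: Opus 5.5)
We first compute the local transport map explicitly and then propagate covariance through its linearization. Under the hypotheses, the transport near $u^\star$ is the $M$-metric projection of $\tilde u=u+\eta M^{-1}g$ onto $\cF_{\mathfrak{N}}(x)$. Here $M=M_{\mathfrak{N}}(x,u)$ and $g=g_{\psi}(x,u)$ are constant, and by assumption $\cF_{\mathfrak{N}}(x)$ coincides locally with the affine set $\{a: J_A(a-a^\star)=0\}$. LICQ and strict complementarity keep the active set fixed for $u$ near $u^\star$, so the inactive constraints stay inactive and their multipliers stay zero. Hence locally
\[
\Gamma_{\mathfrak{N},x}(u)\in\operatorname*{arg\,min}_{a:\,J_A(a-a^\star)=0}\ \tfrac12\|a-\tilde u\|_M^2 .
\]
This is an equality-constrained strictly convex quadratic program, so its solution is unique.

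Next we solve the KKT system from \eqref{eq:kkt_compact}, reduced via Lemma~\ref{lem:local_cone_identity}. Only active multipliers $\lambda_A$ appear, and $\nu=0$ because $\cA$ is locally inactive or absorbed into $J_A$. The system reads
\[
M(a-\tilde u)+J_A^\top\lambda_A=0,\qquad J_A(a-a^\star)=0 .
\]
Eliminating $a=\tilde u-M^{-1}J_A^\top\lambda_A$ gives $J_AM^{-1}J_A^\top\lambda_A=J_A(\tilde u-a^\star)$. LICQ makes $J_A$ full row rank, so $J_AM^{-1}J_A^\top$ is invertible and the pseudoinverse is a true inverse. Substituting back yields the affine formula
\[
\Gamma_{\mathfrak{N},x}(u)=a^\star+P_T^M(\tilde u-a^\star)=a^\star+P_T^M\bigl(u-u^\star\bigr)+c ,
\]
where $P_T^M$ is as in \eqref{eq:projection} and $c$ is a constant, independent of $u$, built from $g$. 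It follows that $D_u\Gamma_{\mathfrak{N},x}=P_T^M$ throughout the neighborhood. We will also check that $P_T^M$ is idempotent, has range $\ker J_A$, and is self-adjoint in $\langle\cdot,M\cdot\rangle$. These facts justify calling it the $M$-orthogonal projector and show that the normal directions $M^{-1}J_A^\top$ are annihilated.

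Finally, for $u=u^\star+\delta$ with $\delta$ local, zero-mean relative to $u^\star$, and of covariance $\Sigma_\theta$, the map is exactly affine on the neighborhood. Therefore $a=\text{const}+P_T^M\delta$, and $\Cov[a\mid x]=P_T^M\Sigma_\theta(P_T^M)^\top$. The statement is an approximation rather than an identity only because proto-action mass can leave the neighborhood where the active set is stable. Away from the affine regime, the first-order remainder of $\Gamma$ would also contribute.

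The main obstacle is rigorously justifying local active-set stability, i.e., that the affine formula holds on a whole neighborhood of $u^\star$ and not just at the point. We would handle this by applying the implicit function theorem to the KKT system. LICQ together with the positive definite reduced Hessian $M$ makes the KKT Jacobian nonsingular. Strict complementarity, $\lambda_i>0$ on active inequalities, persists under small perturbations by continuity, and inactive constraints remain strictly slack. Together these pin the active set. The approximation "$\approx$" is then quantified by bounding the probability that $\delta$ leaves this neighborhood; for instance, if $\Sigma_\theta$ is small relative to the neighborhood radius, that probability and the resulting covariance error vanish.
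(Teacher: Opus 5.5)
Your proposal is correct and follows the paper's own route. You complete the square to see $\Gamma_{\mathfrak{N},x}$ as the $M$-metric projection of $\tilde u=u+\eta M^{-1}g$, use local constancy of $M,g$ to get $D_u\tilde u=I$, use the stable affine active manifold to get $D_u\Gamma_{\mathfrak{N},x}=P_T^M$, and propagate covariance to first order. Beyond the paper's sketch, you make explicit the closed-form KKT solve, the invertibility of $J_AM^{-1}J_A^\top$ under LICQ, and active-set persistence via strict complementarity. Also, your constant $c$ is in fact zero, since $\Gamma_{\mathfrak{N},x}(u^\star)=a^\star$ forces $P_T^M(\tilde u^\star-a^\star)=0$.
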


\begin{proof}[Proof sketch]
Completing the square in \eqref{eq:transport_objective} makes $\Gamma_{\mathfrak{N},x}$ the $M$-metric projection of $\tilde u_{\psi}=u+\eta M^{-1}g_{\psi}$. Local constancy gives $D_u\tilde u_{\psi}=I$, while the stable active manifold gives $D_u\Gamma_{\mathfrak{N},x}=P_T^M$.
Hence $a-a^\star\approx P_T^M(u-u^\star)$, and first-order covariance
propagation yields the stated relation.
\end{proof}

Proposition~\ref{prop:covariance} shows that NFG reshapes exploration without an extra covariance regularizer: normal directions to active constraints are contracted, while tangent directions remain available for policy improvement.
\begin{proposition}[First-order improvement over plain projection]
\label{prop:improvement}
Fix $x$ and $u$. Let $a_0$ be the plain $M$-metric projection obtained from \eqref{eq:transport} with $g_{\psi}=0$, and let $a_{\eta}$ use the fixed critic direction $g=\nabla_aQ_{\psi}(x,a_0)$ with step $\eta>0$. Suppose the assumptions of Proposition~\ref{prop:covariance} hold at $a_0$ and $Q_{\psi}(x,\cdot)$ is continuously differentiable nearby. Then
$
a_{\eta}
=a_0+\eta P_T^M M^{-1}g+o(\eta),
\left.\frac{d}{d\eta}Q_{\psi}(x,a_{\eta})\right|_{\eta=0^+}
 =\bigl\|(I-\Pi_M)M^{-1/2}g\bigr\|^2\ge0,
$
where
$
\Pi_M
=
M^{-1/2}J_A^{\top}
\bigl(J_AM^{-1}J_A^{\top}\bigr)^{\dagger}
J_AM^{-1/2}.
$
The gain is strict unless $g\in\operatorname{range}(J_A^{\top})$, i.e., unless $a_0$ is first-order stationary on the active manifold.
\end{proposition}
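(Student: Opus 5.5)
The plan is to reduce everything to the local affine structure already used in Proposition~\ref{prop:covariance}, and then compute the directional derivative by the chain rule. First, by completing the square in \eqref{eq:transport_objective} as in Lemma~\ref{lem:well_posed_transport}, $a_\eta$ is the $M$-metric projection of $\tilde u_\eta=u+\eta M^{-1}g$ onto $\cF_{\mathfrak{N}}(x)$, while $a_0$ is the projection of $\tilde u_0=u$. The direction $g=\nabla_aQ_{\psi}(x,a_0)$ is held fixed, independent of $\eta$, so $\tilde u_\eta$ moves affinely in $\eta$.

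Second, I will argue that for all sufficiently small $\eta>0$ the active set at $a_\eta$ coincides with that at $a_0$. I would treat active rows of $\cA$ as included in $J_A$, or assume none are active. The argument is the standard one. Near $a_0$ the active constraints are affine and LICQ holds, so the KKT system for the equality-constrained problem $\min\|a-\tilde u_\eta\|_M^2$ subject to $J_A(a-a_0)=0$ has a unique solution and multiplier. Both depend affinely, hence continuously, on $\eta$. Strict complementarity keeps the active multipliers strictly signed for small $\eta$, and continuity keeps the inactive constraints strictly slack. Hence this equality-constrained solution is the true $M$-projection onto $\cF_{\mathfrak{N}}(x)$. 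I expect this to be the main obstacle, because it is the only place where the hypotheses of Proposition~\ref{prop:covariance} are used nontrivially. It also explains why only the one-sided derivative at $\eta=0^+$ is claimed.

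Third, on this stable manifold the projection is affine in its argument with linear part $P_T^M$ from \eqref{eq:projection}. This gives
\[
a_\eta=a_0+P_T^M(\tilde u_\eta-u)=a_0+\eta P_T^MM^{-1}g,
\]
which is exact for small $\eta$ and in particular holds up to $o(\eta)$. Since $Q_{\psi}(x,\cdot)$ is $C^1$ near $a_0$, the chain rule yields
\[
\left.\frac{d}{d\eta}Q_{\psi}(x,a_\eta)\right|_{\eta=0^+}=g^\top P_T^MM^{-1}g.
\]

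Fourth, I simplify this quadratic form algebraically. Expanding $P_T^M$ gives
\[
P_T^MM^{-1}=M^{-1}-M^{-1}J_A^\top(J_AM^{-1}J_A^\top)^\dagger J_AM^{-1}=M^{-1/2}(I-\Pi_M)M^{-1/2}.
\]
Next, $\Pi_M$ is symmetric and idempotent. This follows from the pseudoinverse identity $B^\dagger BB^\dagger=B^\dagger$ with $B=J_AM^{-1}J_A^\top$, and it shows $\Pi_M$ is the orthogonal projector onto $\operatorname{range}(M^{-1/2}J_A^\top)$. It follows that
\[
g^\top M^{-1/2}(I-\Pi_M)M^{-1/2}g=\|(I-\Pi_M)M^{-1/2}g\|^2\ge0.
\]
Equality holds iff $M^{-1/2}g\in\operatorname{range}(M^{-1/2}J_A^\top)$, which by invertibility of $M^{1/2}$ is equivalent to $g\in\operatorname{range}(J_A^\top)$. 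That is exactly first-order stationarity of $Q_{\psi}(x,\cdot)$ on the active manifold at $a_0$, which gives the strictness claim.
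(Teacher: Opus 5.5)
Your proposal is correct and follows the paper's own argument. Completing the square shows that $a_\eta$ is the $M$-projection of $u+\eta M^{-1}g$; differentiating the projection on the stable active manifold gives the expansion with $P_T^M M^{-1}g$; the chain rule then applies; and the identity $P_T^MM^{-1}=M^{-1/2}(I-\Pi_M)M^{-1/2}$, with $\Pi_M$ an orthogonal projector, yields the squared norm and the range characterization. You make explicit the active-set stability step that the paper compresses into ``local differentiation.'' If $\cF_{\mathfrak{N}}(x)$ is nonconvex, identifying the local KKT solution with the global projection also needs $a_\eta\to a_0$, for example from uniqueness of $a_0$; the paper leaves this implicit as well.
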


\begin{proof}[Proof sketch]
Completing the square makes $a_{\eta}$ the $M$-projection of $u+\eta M^{-1}g$. Local differentiation gives the expansion of $a_{\eta}$, and Taylor expansion yields
$
Q_{\psi}(x,a_{\eta})-Q_{\psi}(x,a_0)
=
\eta g^{\top}P_T^MM^{-1}g+o(\eta).
$
Because
$
P_T^M M^{-1}
=
M^{-1/2}(I-\Pi_M)M^{-1/2},
$
the first-order coefficient is the stated squared norm and vanishes exactly when $g\in\operatorname{range}(J_A^{\top})$.
\end{proof}

Proposition~\ref{prop:improvement} identifies the critic tilt as one step of $M$-metric projected ascent along the feasible manifold. Plain projection with $g_{\psi}=0$ forgoes this first-order gain, consistent with the projection-only ablation in Section~\ref{sec:experiments}.
\begin{proposition}[Backpressure as a dynamic feasibility gradient]
\label{prop:backpressure}
Consider a multi-hop queueing network away from the reflection boundary,
with
$
q_{i,k}^{+}
=
q_{i,k}
+
A_{i,k}
+
\sum_j f_{ji,k}
-
\sum_j f_{ij,k}.
$
Let $V(q)=\frac{1}{2}\sum_{i,k}q_{i,k}^2$.
Assume that exogenous terms are action-independent and that
differentiation may be interchanged with conditional expectation.
Then
$
\frac{\partial}{\partial f_{ij,k}}
\E\!\left[V(q^{+})-V(q)\mid q,a\right]
=
\E\!\left[q_{j,k}^{+}-q_{i,k}^{+}\mid q,a\right].
$
When the expected one-slot queue increments are small relative to the
current backlogs, the right-hand side is approximately
$q_{j,k}-q_{i,k}$.
Hence the negative dynamic-residual gradient recovers the differential
backlog direction $q_{i,k}-q_{j,k}$ used by backpressure
scheduling~\cite{tassiulas1990stability,neely2010stochastic}.
\end{proposition}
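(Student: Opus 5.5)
The plan is a direct computation of the one-step Lyapunov drift, differentiated coordinatewise in the flow variable $f_{ij,k}$. First I would fix the setting: away from the reflection boundary the queue update is the affine map
\[
q_{i,k}^{+}=q_{i,k}+A_{i,k}+\sum_j f_{ji,k}-\sum_j f_{ij,k},
\]
with no $[\cdot]^+$ truncation. Hence each $q_{l,k'}^{+}$ is differentiable in the action $a=(f_{ij,k})$. A given flow $f_{ij,k}$ appears in exactly two post-update queues. It enters $q_{i,k}^{+}$ with coefficient $-1$, as an outflow from $i$, and $q_{j,k}^{+}$ with coefficient $+1$, as an inflow to $j$. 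All other partial derivatives $\partial q_{l,k'}^{+}/\partial f_{ij,k}$ vanish. Since $A_{i,k}$ is action-independent by assumption, these derivatives hold pathwise, sample by sample.

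Next I would differentiate $V(q^{+})-V(q)$ pathwise. The term $V(q)$ does not depend on the action, and
\[
\frac{\partial}{\partial f_{ij,k}}\tfrac12\sum_{l,k'}(q_{l,k'}^{+})^2=\sum_{l,k'}q_{l,k'}^{+}\frac{\partial q_{l,k'}^{+}}{\partial f_{ij,k}}=q_{j,k}^{+}-q_{i,k}^{+}.
\]
The assumed interchange of differentiation and conditional expectation then gives the displayed identity for $\E[\,\cdot\mid q,a]$. This step also identifies the gradient of $\bphi^{\mathrm{dyn}}$ with respect to $f_{ij,k}$. The reason is that $\delta_{\mathrm{dyn}}$ depends only on the state, as in Table~\ref{tab:primitives} and Section~\ref{sec:primitives}.

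For the approximation, I would write $q^{+}_{l,k}=q_{l,k}+\Delta_{l,k}$, where $\Delta$ is the one-slot increment. The right-hand side then equals
\[
(q_{j,k}-q_{i,k})+\E[\Delta_{j,k}-\Delta_{i,k}\mid q,a].
\]
When the expected increments are small relative to the backlogs, the correction term is lower order, leaving $q_{j,k}-q_{i,k}$. Negating gives the differential backlog $q_{i,k}-q_{j,k}$. Descending the dynamic residual therefore pushes flow on the commodity and link with the largest positive differential backlog. That is the backpressure weight of \cite{tassiulas1990stability}.

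The main obstacle is only the precise meaning of ``approximately'', which is a heuristic scaling statement rather than a limit theorem. I would state it as a bound: the error is $|\E[\Delta_{j,k}-\Delta_{i,k}\mid q,a]|$, which is at most the sum of bounded arrival and service rates. This is uniformly small relative to $\max(q_{i,k},q_{j,k})$ in the heavy-backlog regime. The boundary caveat also needs a remark. Near $q=0$ the projection $[\cdot]^+$ makes the drift only piecewise smooth, which is why the statement restricts to the interior.
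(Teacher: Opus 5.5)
Your proposal is correct and follows the paper's own argument essentially step for step. Like the paper, you use the interior affine update to get $\partial q^{+}/\partial f_{ij,k}=-e_{i,k}+e_{j,k}$, differentiate $V(q^{+})-V(q)$ samplewise to obtain $q^{+}_{j,k}-q^{+}_{i,k}$, pass to the conditional expectation by the assumed interchange, and then write $q^{+}=q+\Delta q$ to get the current-backlog approximation $q_{j,k}-q_{i,k}$.
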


\begin{proof}[Proof sketch]
Away from reflection,
$\partial q^{+}/\partial f_{ij,k}=-e_{i,k}+e_{j,k}$.
Therefore, samplewise differentiation gives
$
\frac{\partial}{\partial f_{ij,k}}
\bigl(V(q^{+})-V(q)\bigr)
=
(q^{+})^\top(-e_{i,k}+e_{j,k})
=
q_{j,k}^{+}-q_{i,k}^{+}.
$
Taking conditional expectations yields the stated identity.
Writing $q^{+}=q+\Delta q$ gives the current-backlog approximation when
$\E[\Delta q\mid q,a]$ is small.
\end{proof}
Thus, backpressure emerges as the gradient of a lifted dynamic residual. Together with the support guarantee and Propositions~\ref{prop:covariance} and~\ref{prop:improvement}, this shows that network constraints shape the executed support, the exploration geometry, and the actor gradient of the policy.

\section{Case Study: Multi-Tenant Wireless Edge Computing}
\label{sec:application}

We now instantiate NFG in a multi-tenant wireless edge computing system, compiling routing, wireless scheduling, power, compute, placement, admission, and offloading into one feasibility interface.
After defining the state and joint action, we instantiate the residual blocks and show that their product set and stacked Jacobian require no change to the proto-actor or transport interface.

\subsection{System Model}

The network is $G=(\mathcal{V},\mathcal{E})$, with users, base stations, edge servers, aggregation switches, and cloud gateways as nodes, and $\mathcal{M}$ as the tenant or commodity set. Tenant $k\in\mathcal{M}$ follows the chain $\mathcal{S}_k=(\sigma_{k,s})_{s=1}^{L_k}$, whose stages cover communication and edge/cloud functions such as preprocessing, inference, and delivery.

At slot $t$, let $a_t=(f_t,\chi_t,p_t,z_t,\kappa_t,m_t)$, where the components denote routing rates, wireless activation, transmit power, placement, compute allocation, and admission/offloading gates, respectively.
We reserve $\ell$ for primitive types and $s$ for service stages, and suppress time indices below, so $m_k$ denotes the admission gate of tenant $k$.
The action set $\cA$ enforces domain and resource bounds together with continuous box/simplex relaxations; execution applies structured rounding and repair to integral components.

The Markov state $x_t$ contains observable topology, channels, queues, server load, tenant demand, service-chain and failure states, and any virtual state for long-term constraints.
Together with arrivals, channel and failure evolution, queue dynamics, and service execution, $(x_t,a_t)$ instantiates Definition~\ref{def:nds}.

A representative utility is $U(x_t,a_t)=\sum_{k\in\mathcal{M}}[\mathsf{u}_k(R_{k,t})-\beta^D_kD_{k,t}-\beta^S_kV^{\mathrm{SLA}}_{k,t}]-\beta^E\mathsf{E}_t$, where $R_{k,t}$, $D_{k,t}$, $\mathsf{E}_t$, and $V^{\mathrm{SLA}}_{k,t}$ denote throughput, delay, energy, and SLA violation.
Utility weights rank admissible actions, whereas $\bphi_{\mathfrak{N}}(x,a)\in\cK_{\mathfrak{N}}$ defines executability and hence the feasible policy class.

\subsection{Compiling the Edge Constraint Map}

Let $\mathcal{L}_{\mathrm{fab}}\subseteq\mathcal{L}$ index the fourteen constraint blocks defined below.
Their constraint map and admissible set are $\bphi_{\mathfrak{N}}(x,a)=(\bphi^{\ell}(x,a))_{\ell\in\mathcal{L}_{\mathrm{fab}}}$ and $\cK_{\mathfrak{N}}=\prod_{\ell\in\mathcal{L}_{\mathrm{fab}}}\cK^{\ell}$, with $J_{\Phi}(x,a)=\operatorname{col}_{\ell\in\mathcal{L}_{\mathrm{fab}}}D_a\bphi^{\ell}(x,a)$.
Thus product membership is equivalent to simultaneous block admissibility, while transport uses only the compiled triple $(\bphi_{\mathfrak{N}},\cK_{\mathfrak{N}},J_{\Phi})$.

\paragraph{Flow conservation and service-chain continuity.}
For $k\in\mathcal{M}$, use $\bphi^{\mathrm{bal}}_k(x,a)=B_kf_k-b_k(x,m)\in\{0\}$ and $\bphi^{\mathrm{svc}}(x,a)=A_{\mathrm{svc}}z+B_{\mathrm{svc}}f-b_{\mathrm{svc}}(x,m)\in\{0\}$, where $B_k$ is the possibly service-expanded incidence operator and the $b$-terms encode state-dependent admitted demand and stage requirements.
These blocks couple routing, placement, and admission, and their Jacobian rows suppress directions violating topology or service order.

\paragraph{Wireless capacity, power, and interference.}
For $e\in\mathcal{E}$, let $\mathsf{SINR}_e(x,\chi,p)=p_eh_e/(\sum_{e'\in\mathcal{I}(e)}p_{e'}h_{e'\to e}+N_0)$ and $C_e(x,\chi,p)=W_e\chi_e\log(1+\mathsf{SINR}_e)$, where $W_e$, $h_e$, $h_{e'\to e}$, $\mathcal{I}(e)$, and $N_0$ denote bandwidth, direct and cross gains, interferers, and noise.
Capacity and activation--power coupling use $\bphi^{\mathrm{cap}}_e(x,a)=\sum_{k\in\mathcal{M}}\sum_{s=1}^{L_k}f_{e,k,s}-C_e(x,\chi,p)\in\R_-$ and $\bphi^{\mathrm{pow}}_e(x,a)=p_e-P_e^{\max}\chi_e\in\R_-$, where $p_e$ is actual transmit power, so inactive links have $p_e=0$.
Protocol interference uses $\bphi^{\mathrm{phy}}_{\mathrm{conf}}(x,a)=M_{\mathrm{int}}\chi-\one\in\R_-^{d_{\mathrm{conf}}}$, where $M_{\mathrm{int}}$ has $d_{\mathrm{conf}}$ conflict-set rows; under an SINR model, use $\bphi^{\mathrm{phy}}_e(x,a)=\chi_e\gamma_e^{\min}-\mathsf{SINR}_e(x,\chi,p)\in\R_-$.
These blocks expose routing, activation, interference, and power sensitivities to $J_{\Phi}$.

\paragraph{Queue stability.}
Let $q_{t+1}=F_q(q_t,a_t,\omega_t)$ and $V(q)=\frac{1}{2}\|q\|^2$.
For a prescribed state-dependent drift upper bound $\delta_q(x_t)$, typically negative outside a bounded queue set, use $\bphi^{\mathrm{dyn}}(x_t,a_t)=\E[V(q_{t+1})-V(q_t)\mid x_t,a_t]-\delta_q(x_t)\in\R_-$.
This lifted block injects backlog pressure into transport; as in Section~\ref{sec:primitives}, long-term stability additionally requires queue or virtual-state stability.

\paragraph{Edge compute and memory.}
For server $v\in\mathcal{V}$, use $\bphi^{\mathrm{cpu}}_v(x,a)=\sum_{k\in\mathcal{M}}\sum_{s=1}^{L_k}\kappa^{\mathrm{cpu}}_{v,k,s}-C_v^{\mathrm{cpu}}\in\R_-$, $\bphi^{\mathrm{cpu\text{-}place}}_{v,k,s}(x,a)=\kappa^{\mathrm{cpu}}_{v,k,s}-C_v^{\mathrm{cpu}}z_{v,k,s}\in\R_-$, and $\bphi^{\mathrm{mem}}_v(x,a)=\sum_{k\in\mathcal{M}}\sum_{s=1}^{L_k}d^{\mathrm{mem}}_{k,s}z_{v,k,s}-C_v^{\mathrm{mem}}\in\R_-$.
The second block forbids compute allocation to an unplaced stage; together, the three blocks couple placement with server resources.

\paragraph{Slicing and fairness.}
Minimum rate and normalized share use $\bphi^{\mathrm{slice}}_k(x,a)=m_kR_k^{\min}-R_k(x,a)\in\R_-$ and $\bphi^{\mathrm{share}}_k(x,a)=\zeta_km_k-R_k(x,a)/(\tau_R+\sum_{j\in\mathcal{M}}R_j(x,a))\in\R_-$, where $\zeta_k\ge0$ is the configured minimum share and $\tau_R>0$ prevents division by zero.
At execution $m_k\in\{0,1\}$, while its relaxation continuously gates admission.
Summing the share constraints gives the necessary condition $\sum_{k\in\mathcal{M}}\zeta_km_k<1$.

\paragraph{Tail latency and reliability.}
Use $\bphi^{\mathrm{risk}}_k(x,a)=m_k[\CVaR_{\alpha}(D_k(x,a,\omega))-D_k^{\max}]\in\R_-$ and $\bphi^{\mathrm{rel}}_k(x,a)=m_k[P_k^{\min}-P_k^{\mathrm{succ}}(x,a)]\in\R_-$.
The gate deactivates both constraints for rejected tenants, while admitted tenants contribute the corresponding gradients or subgradients to the same transport interface~\cite{rockafellar2000optimization}.

\paragraph{Discrete decisions.}
Placement, path, channel, and admission variables use the closed training relaxation $a_{\mathrm{disc}}\in\operatorname{conv}(\mathcal{I}_{\mathfrak{N}}(x))$, where $\mathcal{I}_{\mathfrak{N}}(x)$ is the integral feasible set; when available, its exact half-space representation is $\bphi^{\mathrm{disc}}(x,a)=G_{\mathrm{disc}}(x)a_{\mathrm{disc}}-h_{\mathrm{disc}}(x)\in\R_-^{d_{\mathrm{disc}}}$, with $d_{\mathrm{disc}}$ inequalities.
At execution, structured rounding and repair must return $a_{\mathrm{disc}}^{\mathrm{exec}}=\mathcal{R}_x(\operatorname{round}(a_{\mathrm{disc}}))\in\mathcal{I}_{\mathfrak{N}}(x)$; otherwise only relaxed or $\varepsilon$-compatibility may be claimed.

\begin{proposition}[Edge-fabric constraint map consistency]
\label{prop:edge_law_map_consistency}
Assume $\cA$ is closed and the stacked feasible set is nonempty.
For every $\ell\in\mathcal{L}_{\mathrm{fab}}$, assume $\bphi^{\ell}$ is jointly measurable, continuous in $a$, and differentiable, or locally Lipschitz with a selected generalized-Jacobian element, where $J_{\Phi}$ is used, and that $\cK^{\ell}$ is nonempty and closed.
For discrete components, assume execution returns an action in $\mathcal{I}_{\mathfrak{N}}(x)$, or restrict the claim to the closed relaxation.
Then the compiled edge-fabric model satisfies Assumption~\ref{ass:basic_regularity}.
If $\Gamma_{\mathfrak{N},x}(u)\in\cF_{\mathfrak{N}}(x)$ $\rho_\theta(\cdot\mid x)$-a.s. for every $x$, then $\pi_\theta^{\mathfrak{N}}\in\Pi_{\mathfrak{N}}$; $\varepsilon$-feasible transport analogously gives $\pi_\theta^{\mathfrak{N}}\in\Pi_{\mathfrak{N}}^{\varepsilon}$.
\end{proposition}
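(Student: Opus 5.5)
The plan is to verify the items of Assumption~\ref{ass:basic_regularity} one at a time for the compiled triple. After that, the compatibility claims follow from Lemma~\ref{lem:transport_compatibility}.

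\emph{Closedness of $\cK_{\mathfrak{N}}$.} $\cK_{\mathfrak{N}}=\prod_{\ell\in\mathcal{L}_{\mathrm{fab}}}\cK^{\ell}$ is a finite product of nonempty closed sets. It is therefore nonempty and closed in the product topology, which agrees with the topology of any block-scaled product norm on the finite-dimensional space $\mathcal{Z}_{\mathfrak{N}}$.

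\emph{Regularity of $\bphi_{\mathfrak{N}}$.} The stacked map $\bphi_{\mathfrak{N}}=(\bphi^{\ell})_{\ell}$ is jointly measurable because each coordinate block is; the product $\sigma$-algebra on a finite product of Euclidean spaces is generated by the coordinate projections. It is continuous in $a$ for fixed $x$ because each block is.

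\emph{First-order information.} Where $J_{\Phi}$ is used, each block is either differentiable or carries a selected generalized-Jacobian element. The stacked $J_{\Phi}=\operatorname{col}_{\ell}J_{\Phi}^{\ell}$ is then a legitimate derivative or a fixed generalized-Jacobian selection. For blocks built from $\CVaR$ or other nonsmooth maps, I would take the selected element from the local Lipschitz hypothesis and not claim that a Clarke Jacobian of the stack equals the stack of Clarke Jacobians. This is exactly the convention already allowed by Assumption~\ref{ass:basic_regularity}.

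\emph{Closedness of $\cA$ and the feasible set.} $\cA$ is closed by hypothesis, using the closed relaxation for discrete coordinates. By Lemma~\ref{lem:primitive_composition}, $\cF_{\mathfrak{N}}(x)$ is the set of actions satisfying every block. Nonemptiness at every modeled state is the assumed nonemptiness of the stacked feasible set. This completes Assumption~\ref{ass:basic_regularity}, and Lemma~\ref{lem:well_posed_feasible_sets} then also gives closed feasible sets with measurable graphs.

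\emph{The discrete coordinates.} This step needs care, and it is the main obstacle. The claim is made either on the closed relaxation, where $\cF_{\mathfrak{N}}(x)$ is taken with $a_{\mathrm{disc}}\in\operatorname{conv}(\mathcal{I}_{\mathfrak{N}}(x))$, or for executed actions after rounding and repair, $\mathcal{R}_x(\operatorname{round}(\cdot))\in\mathcal{I}_{\mathfrak{N}}(x)$.

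In the relaxed case, everything above applies verbatim. In the executed case, I would redefine the transport as $\Gamma_{\mathfrak{N},x}$ followed by repair, and it is this composite map that must satisfy the a.s.\ feasibility hypothesis. Two points then need checking. First, the composite must be measurable, which requires the rounding and repair maps to be measurable; I would note this as implicit in the hypothesis. Second, $\mathcal{I}_{\mathfrak{N}}(x)$ must lie inside the executed feasible set, i.e., integral actions satisfy every continuous block. This holds because $\mathcal{I}_{\mathfrak{N}}(x)$ is defined as the integral feasible set.

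\emph{Compatibility.} If $\Gamma_{\mathfrak{N},x}(u)\in\cF_{\mathfrak{N}}(x)$ holds $\rho_\theta(\cdot\mid x)$-a.s.\ for every $x$, the transport is exact in the sense of the pushforward subsection. Lemma~\ref{lem:transport_compatibility} then gives
\[
\pi_\theta^{\mathfrak{N}}(\cF_{\mathfrak{N}}(x)\mid x)=\rho_\theta\bigl(\Gamma_{\mathfrak{N},x}^{-1}(\cF_{\mathfrak{N}}(x))\mid x\bigr)=1,
\]
where the preimage is measurable because $\cF_{\mathfrak{N}}(x)$ is closed and $\Gamma_{\mathfrak{N},x}$ is measurable. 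Hence $\pi_\theta^{\mathfrak{N}}\in\Pi_{\mathfrak{N}}$. The $\varepsilon$ case is identical with $\cF^{\varepsilon}_{\mathfrak{N}}(x)$, which is closed by Lemma~\ref{lem:well_posed_feasible_sets}.
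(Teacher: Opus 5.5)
Your proposal is correct and follows the paper's route. Finite stacking and finite products preserve joint measurability, continuity in $a$, nonemptiness and closedness; Lemma~\ref{lem:primitive_composition} identifies $\cF_{\mathfrak{N}}(x)$ with simultaneous block membership; and Lemma~\ref{lem:transport_compatibility} gives the exact and $\varepsilon$-compatibility claims, with your extra remarks (norm equivalence, blockwise Jacobian selection, measurability of the composite repair map) filling in details the sketch leaves implicit.

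One small correction concerns your ``second point'' that $\mathcal{I}_{\mathfrak{N}}(x)$ lies inside the executed feasible set. It is unnecessary, because the hypothesis already asserts that the (composite) transport lands in $\cF_{\mathfrak{N}}(x)$. It is also not justified by the definition: $\mathcal{I}_{\mathfrak{N}}(x)$ constrains only $a_{\mathrm{disc}}$, and rounded gates can break coupling blocks such as $p_e\le P_e^{\max}\chi_e$ or $\kappa^{\mathrm{cpu}}_{v,k,s}\le C_v^{\mathrm{cpu}}z_{v,k,s}$. The paper uses repair membership only for integrality, not as a source of feasibility.
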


\begin{proof}[Proof sketch]
Finite stacking preserves joint measurability and continuity, while finite products preserve nonemptiness and closedness.
Lemma~\ref{lem:primitive_composition} identifies the stacked feasible set with simultaneous block membership, establishing Assumption~\ref{ass:basic_regularity}.
Lemma~\ref{lem:transport_compatibility} then gives the exact or $\varepsilon$-compatible pushforward claim; exact integrality additionally requires the stated repair membership.
\end{proof}

The proposition shows that these requirements are blocks of one compiled interface rather than separate penalties or controllers.
A deployment may select any subset, but the proto-actor and transport layer still receive only $(\bphi_{\mathfrak{N}},\cK_{\mathfrak{N}},J_{\Phi})$.

\section{Experiments}
\label{sec:experiments}

We evaluate NFG-RL in one controlled and two public-trace-conditioned wireless-edge surrogates, SMEC-5G and 5G-C3. All environments retain the joint action $a_t=(f_t,\chi_t,p_t,z_t,\kappa_t,m_t)$ and the residual families in Section~\ref{sec:application}, and evaluate utility--feasibility tradeoffs, component contributions, and robustness to load and system shifts~\cite{zhang2026enabling,raca2020beyond,tocze2022edge,mehran2022matching}.

\subsection{Common Evaluation Protocol}
\label{subsec:common_protocol}

\paragraph{Compared methods.}
We compare NFG-RL with Greedy-EDF; queue-aware BP-DPP
~\cite{tassiulas1990stability,neely2010stochastic};
PPO- and SAC-Penalty~\cite{schulman2017proximal,haarnoja2018soft};
PPO-Lagrangian and CPO
~\cite{altman2021constrained,tessler2018reward,achiam2017constrained};
GNN-SAC~\cite{rusek2020routenet,almasan2022deep,geyer2019deeptma};
and OptLayer-SAC~\cite{amos2017optnet,agrawal2019differentiable}.
The reduced-instance ``Myopic-Search'' baseline is a myopic enumerative reference
that searches binary admission and placement choices, heuristically
allocates wireless and compute shares, and maximizes one-step model
utility; it is neither an upper bound nor a dynamic optimum. All methods
use the same 27-dimensional state and 18-dimensional action schema, and
learned methods share training budgets and network widths. GNN-SAC uses
a three-node, one-step tenant-interaction encoder and is therefore a
structured-encoder, rather than a physical-topology, baseline.

\paragraph{Raw and executed actions.}
Each method proposes a raw action $a_t$. Only proposals that would make
the simulator undefined are minimally repaired as
$\widetilde a_t=\mathcal{R}_{x_t}(a_t)$ through bound clipping, conflict
removal, excess-traffic dropping, and integral-placement repair.
Raw-action violations and $\|\widetilde a_t-a_t\|_2$ are evaluated before
execution, whereas utility, throughput, and delay use
$\widetilde a_t$, preventing simulator-side repair from hiding
infeasibility. The deployed NFG transport uses finite unrolling, with deterministic tie breaking for integral decisions.

\paragraph{Metrics.}
Raw-action feasibility is measured by
$\Delta_{\mathrm{res}}(x_t,a_t)
=\dist_{\mathcal Z}(\bphi_{\mathfrak{N}}(x_t,a_t),
\cK_{\mathfrak{N}})$ and its blockwise counterpart
$\Delta_{\ell}(x_t,a_t)
=\dist_{\mathcal Z^{\ell}}(\bphi^{\ell}(x_t,a_t),\cK^{\ell})$.
We report feasible utility, SLA satisfaction, tail delay, raw-action
violation, residual and repair distances, starvation, and decision
latency. Feasible utility subtracts normalized emergency-drop and repair
penalties from executed-action utility.
Because deployment uses finitely many unrolled transport steps, NFG-RL is evaluated as $\varepsilon$-compatible under Lemma~\ref{lem:transport_compatibility}; its nonzero raw-action violation is the remaining solver residual, not a contradiction of the exact-transport support guarantee.

\paragraph{Training and statistical reporting.}
Learned methods use eleven random seeds ($0$--$10$), batch size 128, hidden width 128,
replay capacity $10^5$, and a 96-slot horizon. Main and ablation training
use 1000 and 600 steps, respectively, while nominal, common-stress, and
ablation-stress evaluation uses 10, 3, and 2 episodes per seed. Tables
report mean $\pm$ standard deviation across seeds. Training uses six
unrolled transport steps with periodic active-set warm starts. The main
NFG-RL results use the model-assisted active-set evaluator, whereas the
16-step unrolled evaluator is used only for runtime diagnostics; hence,
the main comparison is model-assisted rather than purely model-free.

\subsection{Part I: Controlled Application Simulation}
\label{subsec:controlled_simulation}

\subsubsection{Controlled Three-Tenant Application Surrogate}

The controlled environment is a three-tenant reduction of Section~\ref{sec:application}: it retains the six action groups and fourteen residual blocks but omits explicit multi-hop and per-stage service-chain topologies. Its 27-dimensional state includes demand, channel proxies, queues, deadlines and class weights, CPU and memory availability, background load,
mobility and failures, time features, load, and environment identity.
A 4096-point, eight-dimensional noisy sinusoidal sequence drives demand
(first three coordinates), channel quality (next three), background load,
and mobility (last two). All methods share the same differentiable wireless/interference,
edge/cloud compute, queue, delay, SLA, energy, starvation, utility, and
residual dynamics.

Experiments use loads $\rho\in\{0.5,0.7,0.9,1.1,1.3\}$ and a raw-action violation tolerance
of $10^{-4}$; the remaining training and evaluation settings follow the common protocol above.

\subsubsection{Feasibility and Load Stress}

All methods are trained at $\rho=0.9$ and evaluated over $\rho\in\{0.5,0.7,0.9,1.1,1.3\}$ without retuning. OptLayer-SAC isolates generic Euclidean projection, while the penalty/Lagrangian baselines isolate scalar constraint handling.

\begin{table}[t]
\centering
\caption{Nominal controlled results at $\rho=0.9$
(mean $\pm$ standard deviation).}
\label{tab:simulation_main}
\scriptsize
\setlength{\tabcolsep}{2.2pt}
\begin{tabular}{@{}lccc@{}}
\toprule
Method
& Utility $\uparrow$
& P95 delay $\downarrow$
& Violation $\downarrow$ \\
\midrule
Greedy-EDF
& $2.80\!\pm\!0.56$
& $442.17\!\pm\!159.59$
& $0.236\!\pm\!0.012$ \\
BP-DPP
& $2.86\!\pm\!0.55$
& $577.13\!\pm\!154.44$
& $0.318\!\pm\!0.015$ \\
MILP-Oracle
& $9.09\!\pm\!0.16$
& $208.50\!\pm\!6.48$
& $0.262\!\pm\!0.023$ \\
PPO-Penalty
& $0.46\!\pm\!0.64$
& $253.91\!\pm\!127.06$
& $0.623\!\pm\!0.032$ \\
SAC-Penalty
& $5.33\!\pm\!4.08$
& $231.30\!\pm\!149.00$
& $0.663\!\pm\!0.023$ \\
PPO-Lagr.
& $1.21\!\pm\!1.43$
& $251.75\!\pm\!135.41$
& $0.620\!\pm\!0.034$ \\
CPO
& $0.02\!\pm\!0.57$
& $249.20\!\pm\!137.25$
& $0.626\!\pm\!0.028$ \\
GNN-SAC
& $6.85\!\pm\!4.07$
& $134.14\!\pm\!26.58$
& $0.524\!\pm\!0.048$ \\
OptLayer-SAC
& $1.17\!\pm\!1.26$
& $810.24\!\pm\!968.01$
& $0.625\!\pm\!0.075$ \\
NFG-RL
& $\mathbf{12.83\!\pm\!0.33}$
& $\mathbf{128.25\!\pm\!31.35}$
& $\mathbf{0.189\!\pm\!0.005}$ \\
\bottomrule
\end{tabular}
\end{table}

\begin{table}[t]
\centering
\caption{Ablations at nominal load and under the six-shift suite
(mean $\pm$ standard deviation).}
\label{tab:ablation}
\scriptsize
\setlength{\tabcolsep}{2.4pt}
\begin{tabular}{@{}lccc@{}}
\toprule
Variant
& Nominal $U$ $\uparrow$
& Shift $U$ $\uparrow$
& Repair $\downarrow$ \\
\midrule
Full NFG-RL
& $\mathbf{12.83\!\pm\!0.33}$
& $9.68\!\pm\!1.56$
& $\mathbf{0.000\!\pm\!0.000}$ \\
w/o flow/service
& $11.53\!\pm\!0.26$
& $8.38\!\pm\!1.71$
& $0.147\!\pm\!0.035$ \\
w/o link/wireless
& $8.16\!\pm\!1.26$
& $6.94\!\pm\!1.75$
& $0.000\!\pm\!0.000$ \\
w/o queue dyn.
& $12.19\!\pm\!0.11$
& $9.24\!\pm\!1.78$
& $0.000\!\pm\!0.000$ \\
w/o compute/mem.
& $10.73\!\pm\!0.46$
& $7.56\!\pm\!1.60$
& $0.161\!\pm\!0.036$ \\
w/o slicing/fair.
& $12.47\!\pm\!0.11$
& $\mathbf{9.77\!\pm\!1.67}$
& $0.000\!\pm\!0.000$ \\
w/o risk/reliab.
& $12.47\!\pm\!0.13$
& $9.71\!\pm\!1.75$
& $0.000\!\pm\!0.000$ \\
w/o discrete feas.
& $2.22\!\pm\!1.59$
& $-5.82\!\pm\!6.61$
& $0.290\!\pm\!0.069$ \\
Isotropic metric
& $12.45\!\pm\!0.23$
& $8.97\!\pm\!1.07$
& $0.000\!\pm\!0.000$ \\
Projection only
& $-10.30\!\pm\!9.98$
& $-18.79\!\pm\!14.64$
& $0.416\!\pm\!0.126$ \\
Stop-gradient
& $12.17\!\pm\!0.13$
& $9.32\!\pm\!1.70$
& $0.000\!\pm\!0.000$ \\
\bottomrule
\end{tabular}
\end{table}

\begin{table*}[h]
\centering
\caption{Results in the two public-trace-conditioned surrogate environments
(mean $\pm$ standard deviation over seeds). The best and second-best results
in each environment are shown in bold and underlined, respectively.}
\label{tab:real_trace_results}
\scriptsize
\setlength{\tabcolsep}{1.8pt}
\renewcommand{\arraystretch}{1.08}
\resizebox{\textwidth}{!}{%
\begin{tabular}{@{}lcccccccccccc@{}}
\toprule
&
\multicolumn{6}{c}{SMEC-5G}
&
\multicolumn{6}{c}{5G-C3}
\\
\cmidrule(lr){2-7}
\cmidrule(lr){8-13}
Method
& Utility $\uparrow$
& SLA $\uparrow$
& P99 $\downarrow$
& Viol. $\downarrow$
& Repair $\downarrow$
& Starv. $\downarrow$
& Utility $\uparrow$
& SLA $\uparrow$
& P99 $\downarrow$
& Viol. $\downarrow$
& Repair $\downarrow$
& Starv. $\downarrow$
\\
\midrule

Greedy-EDF
& $4.55 \pm 0.89$
& $0.15 \pm 0.02$
& $459.54 \pm 468.84$
& $\underline{0.205 \pm 0.010}$
& $\underline{0.031 \pm 0.005}$
& $0.410 \pm 0.013$
& $3.16 \pm 0.20$
& $0.09 \pm 0.03$
& $\underline{117.30 \pm 5.05}$
& $0.214 \pm 0.009$
& $\underline{0.042 \pm 0.003}$
& $0.586 \pm 0.010$
\\

BP-DPP
& $5.82 \pm 1.49$
& $0.28 \pm 0.04$
& $571.34 \pm 631.20$
& $0.234 \pm 0.026$
& $0.069 \pm 0.024$
& $0.364 \pm 0.023$
& $5.23 \pm 0.82$
& $0.36 \pm 0.09$
& $117.77 \pm 18.74$
& $\underline{0.198 \pm 0.021}$
& $0.056 \pm 0.025$
& $0.370 \pm 0.036$
\\

MILP-Oracle
& $7.03 \pm 0.15$
& $0.36 \pm 0.03$
& $\underline{127.55 \pm 11.21}$
& $0.378 \pm 0.010$
& $0.350 \pm 0.017$
& $0.327 \pm 0.006$
& $\underline{6.90 \pm 0.18}$
& $\underline{0.39 \pm 0.03}$
& $327.02 \pm 53.51$
& $0.285 \pm 0.023$
& $0.059 \pm 0.019$
& $0.170 \pm 0.056$
\\

PPO-Penalty
& $0.14 \pm 0.52$
& $0.13 \pm 0.06$
& $212.74 \pm 139.77$
& $0.619 \pm 0.011$
& $0.461 \pm 0.005$
& $0.653 \pm 0.013$
& $0.35 \pm 0.53$
& $0.18 \pm 0.08$
& $219.81 \pm 103.78$
& $0.604 \pm 0.028$
& $0.443 \pm 0.014$
& $0.552 \pm 0.034$
\\

SAC-Penalty
& $7.39 \pm 1.25$
& $\underline{0.40 \pm 0.05}$
& $215.16 \pm 190.43$
& $0.604 \pm 0.039$
& $0.320 \pm 0.064$
& $\underline{0.064 \pm 0.056}$
& $5.04 \pm 1.88$
& $0.32 \pm 0.07$
& $608.12 \pm 526.08$
& $0.647 \pm 0.017$
& $0.422 \pm 0.048$
& $0.120 \pm 0.090$
\\

PPO-Lagrangian
& $0.10 \pm 0.60$
& $0.16 \pm 0.05$
& $261.45 \pm 208.25$
& $0.620 \pm 0.012$
& $0.458 \pm 0.011$
& $0.631 \pm 0.023$
& $0.53 \pm 0.83$
& $0.18 \pm 0.08$
& $286.84 \pm 159.79$
& $0.603 \pm 0.033$
& $0.439 \pm 0.018$
& $0.521 \pm 0.063$
\\

CPO
& $-0.09 \pm 0.49$
& $0.15 \pm 0.06$
& $264.80 \pm 212.41$
& $0.620 \pm 0.012$
& $0.458 \pm 0.010$
& $0.650 \pm 0.024$
& $-0.09 \pm 0.94$
& $0.16 \pm 0.09$
& $159.12 \pm 59.04$
& $0.590 \pm 0.043$
& $0.449 \pm 0.016$
& $0.604 \pm 0.076$
\\

GNN-SAC
& $6.97 \pm 0.09$
& $0.33 \pm 0.03$
& $148.38 \pm 21.28$
& $0.547 \pm 0.043$
& $0.356 \pm 0.008$
& $0.077 \pm 0.013$
& $5.65 \pm 1.22$
& $0.19 \pm 0.09$
& $247.16 \pm 90.58$
& $0.549 \pm 0.019$
& $0.405 \pm 0.042$
& $\underline{0.107 \pm 0.060}$
\\

OptLayer-SAC
& $\underline{7.55 \pm 1.29}$
& $0.25 \pm 0.08$
& $216.13 \pm 158.17$
& $0.415 \pm 0.067$
& $0.269 \pm 0.031$
& $0.081 \pm 0.048$
& $3.43 \pm 1.69$
& $0.20 \pm 0.05$
& $826.42 \pm 731.49$
& $0.519 \pm 0.027$
& $0.340 \pm 0.022$
& $0.283 \pm 0.132$
\\

NFG-RL
& $\mathbf{10.38 \pm 0.11}$
& $\mathbf{0.49 \pm 0.05}$
& $\mathbf{92.96 \pm 12.40}$
& $\mathbf{0.163 \pm 0.012}$
& $\mathbf{0.000 \pm 0.000}$
& $\mathbf{0.006 \pm 0.002}$
& $\mathbf{9.76 \pm 0.15}$
& $\mathbf{0.70 \pm 0.07}$
& $\mathbf{80.04 \pm 0.96}$
& $\mathbf{0.146 \pm 0.003}$
& $\mathbf{0.000 \pm 0.000}$
& $\mathbf{0.041 \pm 0.012}$
\\

\bottomrule
\end{tabular}%
}
\end{table*}

\begin{figure}[!t]
\centering
\begin{minipage}[t]{0.495\columnwidth}
    \centering
    \includegraphics[width=\linewidth]
    {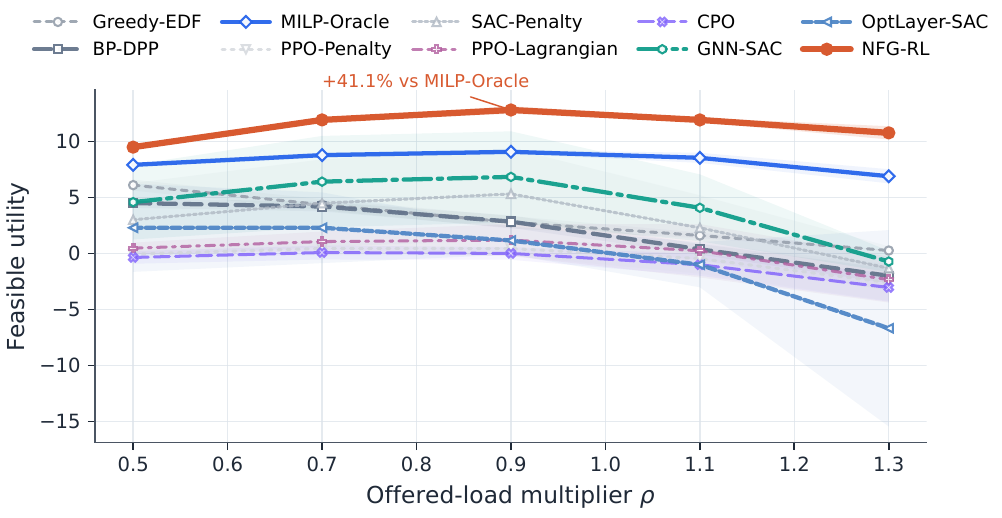}
    \vspace{-1.5mm}

    {\scriptsize (a) Controlled load stress.}
\end{minipage}\hfill
\begin{minipage}[t]{0.495\columnwidth}
    \centering
    \includegraphics[width=\linewidth]
    {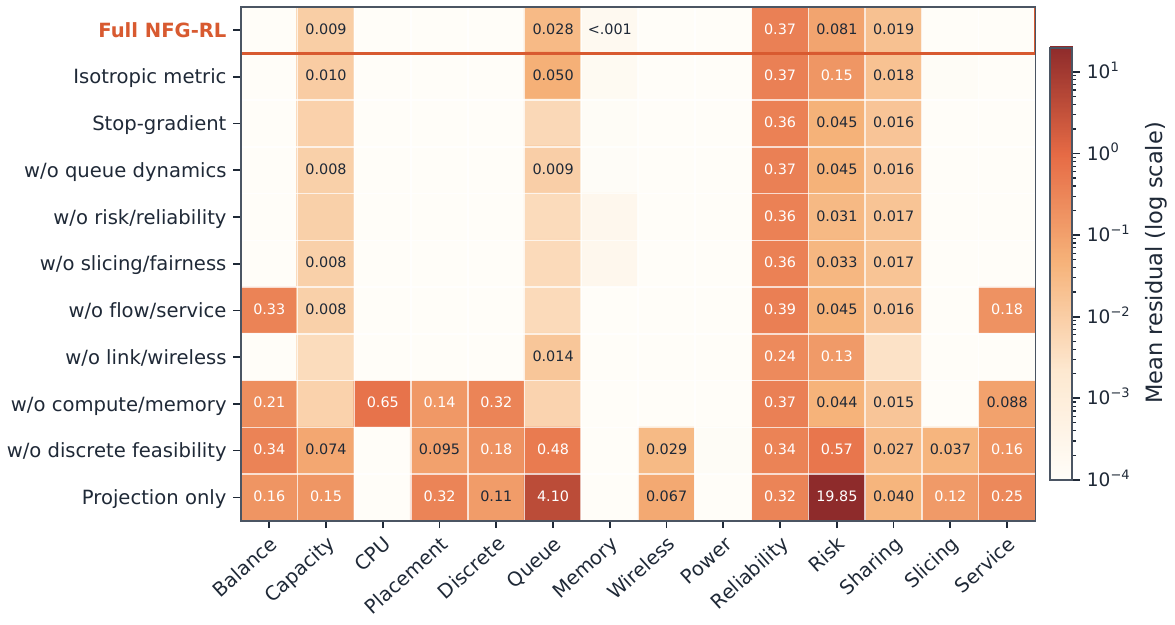}
    \vspace{-1.5mm}

    {\scriptsize (b) Residual decomposition.}
\end{minipage}
\vspace{-1mm}
\caption{Controlled evaluation: (a) feasible utility across load and
(b) blockwise residual magnitudes for NFG-RL and its ablations on a shared
logarithmic scale.}
\label{fig:controlled_results}
\vspace{-2mm}
\end{figure}

\noindent\textbf{Controlled results.}
At $\rho=0.9$, NFG-RL achieves utility $12.83$ versus $9.09$ for the strongest non-NFG method, MILP-Oracle ($41.1\%$ higher), P95 delay $128.25$ versus $134.14$ for GNN-SAC ($4.4\%$ lower), and violation $0.189$ versus $0.236$ for Greedy-EDF ($19.9\%$ lower). Its SLA satisfaction, residual distance, repair distance, and decision latency are $0.27\!\pm\!0.04$, $0.035\!\pm\!0.003$, $0$, and $1.20\!\pm\!0.06$\,ms, respectively. The myopic oracle requires
$14.99\!\pm\!0.08$\,ms, whereas the remaining non-oracle baselines use $0.05$--$0.25$\,ms. At $\rho=1.3$, NFG-RL retains the highest observed utility and records P95 delay $196.38$ and P95 residual distance $0.266$
[Fig.~\ref{fig:controlled_results}(a)].

\subsubsection{Constraint and Mechanism Ablations}
\label{subsubsec:constraint_ablation}

Each constraint ablation removes one residual family from transport while retaining it in evaluation: flow/service, link/wireless, queue dynamics, compute/memory, slicing/fairness, risk/reliability, or discrete feasibility. Mechanism ablations set $\varsigma=0$ in~\eqref{eq:metric} (isotropic), $g_{\psi}=0$ in~\eqref{eq:transport_objective} (projection only), or block $D_u\Gamma$ in~\eqref{eq:transport_chain_rule} (stop-gradient).

\paragraph{Ablation results.}
Full NFG-RL has the highest nominal utility, $2.9\%$ above the closest ablation. Removing discrete feasibility reduces nominal/shift-average utility from $12.83/9.68$ to $2.22/-5.82$ and raises repair from $0$ to $0.290$, while projection only is unstable at $-10.30/-18.79$.
Removing flow/service or compute/memory raises shift-average violation from $0.194$ to $0.344$ and $0.551$, respectively, and repair to $0.147$ and $0.161$, consistent with the localized residuals in Fig.~\ref{fig:controlled_results}(b). Removing slicing/fairness or
risk/reliability instead yields shift-average utilities $9.77$ and
$9.71$, slightly above $9.68$ for the full model; thus, no single
residual family uniformly maximizes unweighted utility under every
perturbation.

\subsection{Part II: Public-Trace-Conditioned Surrogate Environments}
\label{subsec:real_trace}

\subsubsection{Data Processing and Environment Construction}

SMEC-5G uses numeric records from the SMEC artifact, while 5G-C3 combines the UCC 5G channel/context trace with C3 edge-infrastructure records~\cite{zhang2026enabling,raca2020beyond,tocze2022edge,
mehran2022matching}. These are trace-conditioned surrogates rather than
direct deployment replays. For each environment, we retain at most 64
numeric columns, concatenate them, select the eight highest-variance
dimensions, scale them by their 2nd and 98th percentiles, clip them to
$[0,1]$, and resample them to 4096 points. The resulting coordinates
condition demand (three), channel quality (three), background load, and
mobility, without assuming a one-to-one correspondence with named
measurement fields.

Both environments share the state, action, transition, residual, and
safety models. SMEC-5G applies tenant-dependent demand scaling,
mobility-dependent channel attenuation, reduced CPU availability,
tighter deadlines, and environment-specific class weights; 5G-C3 uses
tenant-dependent demand scaling, trace-conditioned CPU availability,
reduced memory, and environment-specific deadlines. Seed- and
episode-indexed windows are deterministic, with no separate
chronological split.

\subsubsection{Public-Trace-Conditioned Comparison}

All ten methods are evaluated at $\rho=0.9$ for 10 episodes per seed. The reported metrics are generated by the common surrogate dynamics conditioned on the processed traces and are not all direct measurements from the source datasets.

\begin{figure}[!t]
\centering
\begin{minipage}[t]{0.495\columnwidth}
    \centering
    \includegraphics[width=\linewidth]
    {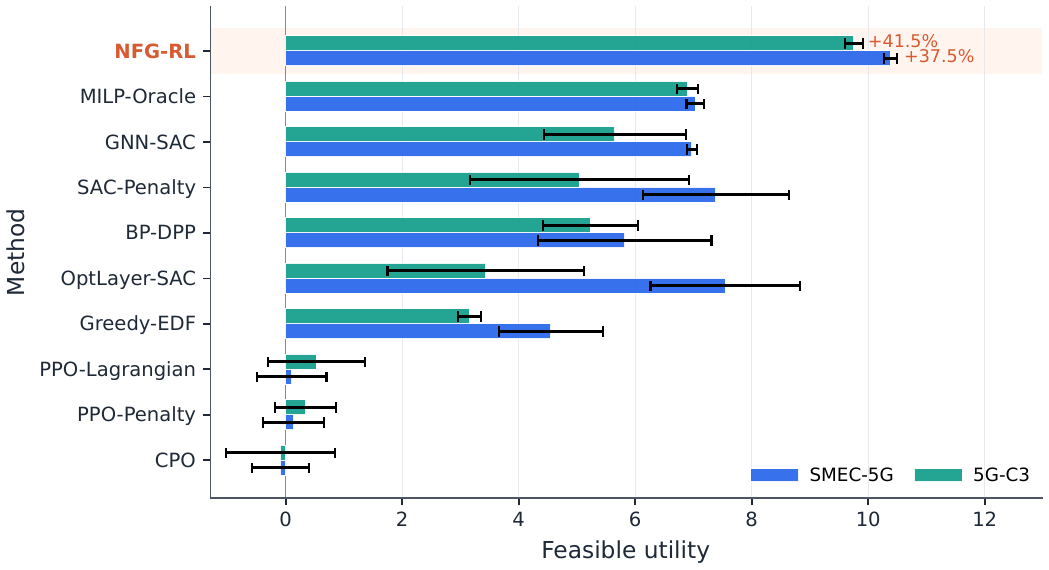}
    \vspace{-1.5mm}

    {\scriptsize (a) Public-trace comparison.}
\end{minipage}\hfill
\begin{minipage}[t]{0.495\columnwidth}
    \centering
    \includegraphics[width=\linewidth]
    {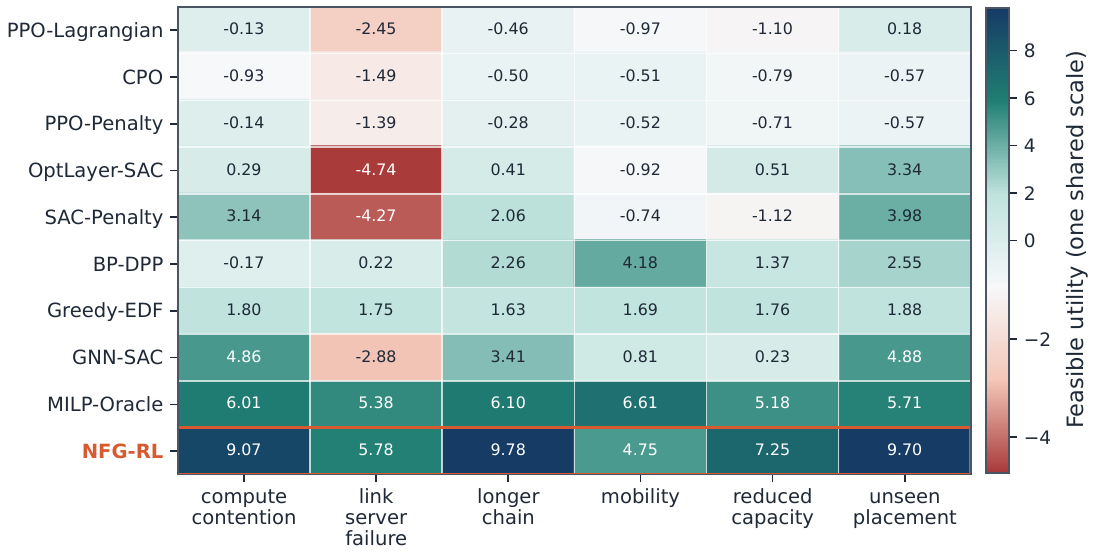}
    \vspace{-1.5mm}

    {\scriptsize (b) Six system shifts.}
\end{minipage}
\vspace{-1mm}
\caption{Public-trace-conditioned evaluation: (a) feasible utility in
SMEC-5G and 5G-C3 and (b) 5G-C3 utility under six perturbations, using a
shared color scale.}
\label{fig:real_results}
\vspace{-2mm}
\end{figure}

\noindent\textbf{Trace-conditioned results.}
In SMEC-5G, NFG-RL achieves utility $10.38$ versus $7.55$ for OptLayer-SAC ($37.5\%$ higher), reduces violation from $0.415$ to $0.163$ ($60.8\%$), and reduces P99 delay from $216.13$ to $92.96$ ($57.0\%$). In 5G-C3, it achieves $9.76$ versus $6.90$ for the myopic oracle ($41.5\%$ higher), reduces violation from $0.285$ to $0.146$ ($48.5\%$), and reduces P99 delay from $327.02$ to $80.04$ ($75.5\%$); it also provides the highest SLA satisfaction in both environments.

\subsubsection{Parametric 5G-C3 Stress Tests}

Without retraining, all policies are evaluated in 5G-C3 at $\rho=1.05$ for three episodes per seed under six deterministic shifts: higher mobility with channel quality $\times0.72$; compute contention with CPU $\times0.55$ and background load $+0.35$; a longer-chain proxy with demand $\times1.12$ and CPU $\times0.76$; reduced capacity with channel/CPU $\times0.66/0.68$; link/server failure with factors $0.35/0.30$ and an additional channel/CPU multiplier of $0.70$; and placement-resource stress with memory/CPU $\times0.62/0.82$.

NFG-RL achieves the highest feasible utility in five shifts; under mobility, the myopic oracle scores $6.61$ versus $4.75$ for NFG-RL. Thus, the results support robustness to most tested resource perturbations but not uniform superiority under mobility shift or cross-trace transfer.

\section{Conclusion}
\label{sec:conclusion}
This paper presented NFG-RL, a framework in which network feasibility enters the policy itself. Admissibility is represented as a residual inclusion, heterogeneous network constraints compile into typed primitives, proto-actions are transported through a local variational operator, and gradients flow through the transport map, so one compiled triple $(\bphi_{\mathfrak{N}},\cK_{\mathfrak{N}},J_{\Phi})$ determines where the executed policy places probability mass, how exploration noise is shaped at active constraints, and how critic gradients reach the actor.
The reduced wireless-edge surrogate illustrates how routing/placement
proxies, wireless interference, queueing, compute allocation, slicing,
reliability, tail latency, and discrete decisions can be represented through
one typed residual interface.
The public measurements condition the exogenous demand, channel, background
load, and mobility sequences, while counterfactual actions and performance
metrics are generated by the common surrogate transition model.
Across SMEC-5G and 5G-C3, NFG-RL raises feasible utility by 37.5--41.5\% over the strongest non-NFG baseline in each environment, cuts raw-action violation by 48.5--60.8\%, and lowers P99 delay by 57.0--75.5\%, confirming that feasibility transport improves both executability and control quality.

\nocite{*}
\bibliographystyle{IEEEtran}
\bibliography{refs}

\end{document}